\setlist[enumerate]{leftmargin=*}
\setlist[itemize]{leftmargin=*}
\setlist[description]{font=\mdseries\textsf, leftmargin=1.5em}
\newcommand{\customqed}[1]{{\renewcommand{\qedsymbol}{#1}\qed}}
\newcommand{\varqed}{\customqed{\hbox{$\lrcorner$}}}
\theoremstyle{plain}
\newtheorem{lemma}{Lemma}[section]
\newtheorem{theorem}{Theorem}
\newtheorem{proposition}[lemma]{Proposition}
\newtheorem{claim}[lemma]{Claim} 
\theoremstyle{definition}
\newtheorem{Definition}[lemma]{Definition}
\newenvironment{definition}{\begin{Definition}}{\varqed\end{Definition}}
\newtheorem{Notation}[lemma]{Notation}
\newtheorem{Condition}[lemma]{Condition}
\newtheorem{Remark}[lemma]{Remark}
\newtheorem*{Remarknonumber}{Remark} %
\newtheorem{Remarks}[lemma]{Remarks}
\newenvironment{remark}{%
	\begin{Remark}}{\varqed\end{Remark}}
\newtheorem{Example}[lemma]{Example}
\newtheorem{Examples}[lemma]{Examples}
\newcommand{\bbN}{\mathbb{N}}
\newcommand{\binv}{\{0,1\}}
\newcommand{\strings}{\binv^*}
\newcommand{\lstrings}{\binv^\ell}
\newcommand{\bseqs}{\binv^\infty}
\newcommand{\prefixof}{\prec}
\newcommand{\prefixeqof}{\preceq}
\DeclareMathOperator{\dom}{dom}
\title{Betting strategies with bounded splits}
\author{Tomislav Petrovi\'c}
\begin{document}

\maketitle

\begin{abstract}
We show that a pair of Kolmogorov-Loveland betting strategies
cannot win on every non-Martin-L\"of random sequence if
either of the two following conditions is true:
\begin{enumerate}[label=(\Roman*)]
	\item\label{upper_bounded_splits} 
	There is an unbounded computable function $g$ such that
	both betting strategies, when betting on an infinite binary sequence,
	almost surely, for almost all $\ell$,
	bet on at most $\ell-g(\ell)$ positions among the first $\ell$ positions
	of the sequence.
	\item\label{lower_bounded_splits} 
	There is a sublinear function $g$ such that
	both betting strategies, when betting on an infinite binary sequence,
	almost surely, for almost all $\ell$,
	bet on at least $\ell-g(\ell)$ positions among the first $\ell$ positions
	of the sequence.
\end{enumerate}

\end{abstract}
\section{Introduction}
Whether Kolmogorov-Loveland randomness is equal to Martin-L\"of randomness is
a well known open question in the field of algorithmic randomness
\cite{Ambos},\cite{Open}.

A Kolmogorov-Loveland betting strategy, starting with a finite amount of capital, makes bets on the values of bits at positions of an infinite binary sequence.
The positions can be chosen adaptively, based on the values of bits at positions that the betting strategy has previously bet on.

Once a new position to bet on is chosen,
the Kolmogorov-Loveland betting strategy then guesses the value
of the bit of the sequence
at the position 
and
uses some fraction of its capital as a wager.
If the guess was correct, the wager is doubled, and the capital increases
by the wagered amount, and if not, the wager is lost,
and the capital decreases
by the wagered amount.
The betting strategy wins on a sequence if, in the succession of bets, 
the supremum of capital is unbounded.

A sequence is Kolmogorov-Loveland random (KLR)
if no partialy computable Kolmogorov-Loveland betting strategy (KLBS) wins on the sequence. In fact, we can consider only
computable KLBS-es, since for every partialy computable KLBS 
there is a pair of computable KLBS-es that wins on the same sequences
\cite{Merkle}.

In this paper we'll look at Martin-L\"of randomness only with respect to the uniform Lebesgue measure.

A constructive null cover is a computable sequence $f_n$
of computable sequences of finite binary strings
such that for all $n$, $\sum_{s\in f_n}2^{-|s|}\leq 2^{-n}$,
and for every string $s$ in the sequence $f_{n+1}$ there is
a prefix of $s$ in $f_n$.

The set of infinite binary sequences that,
for all $n$, have a prefix in $f_n$ is called an effective nullset.
A sequence is Martin-L\"of random (MLR) if it is not contained in any effective nullset.

There is another characterization of MLR sequences, in terms of 
turing machines.
We'll use the monotone Turing machine and monotone complexity as defined in the textbook \cite{LiVitanyi}.
A monotone turing machine is a Turing machine with 
an infinite input tape, 
an infinite output tape and 
an infinite work tape. 
In each step, a monotone turing machine either reads the next bit from the input tape, or writes the next bit on the output tape, or does a step of computation on the work tape.
The montone complexity of a string $s$ with respect to the monotone Turing machine $T$ is the length of the shortest sequence of bits that
need to be read from the input tape, before the string $s$ is written on the output tape. We denote it with $\text{Km}_T(s)$.

It can be shown, \cite{Levin_Km},\cite{Schnorr_Km},
that there is a (universal) monotone Turing machine $U$ such that for every other monotone Turing machine $T$ and every string $s$,
$\text{Km}_U(s)$ lower bounds $\text{Km}_T(s)$ up to some additive constant that depends on $T$.
We call $\text{Km}_U(s)$ the monotone complexity of the string $s$.
A sequence is MLR if and only if there is an additive constant such that
the monotone complexity of any prefix of the sequence is within the constant of the length of the prefix.

It is easy to see that a \textit{single} partialy computable KLBS cannot win on all non-MLR sequences.
Suppose the KLBS is not total computable.
Then there is some finite sequence of bets, and after the last bit was reveiled no further bets are made.
The supremum of capital in this sequence of bets is bounded,
and the set of infinite binary sequences consistent with the outcomes of bets
contains non-MLR sequences, for instance, the ones that end in infinitely many zeros.

On the other hand, suppose the KLBS is total computable.  
We can find an infinite sequence of bets where the betting strategy never increases capital (to wit, the loosing streak).
The set of infinite binary sequences, consistent with this sequence of betting outcomes, is an effective null set.

However, it is still not known if there is maybe just a pair of partialy computable KLBS-es 
such that on every non-MLR sequence at least one of them wins
\cite{Open}.

\subsection{Related results}
In \cite{Muchnik},
Kolmogorov-Loveland randomness was defined, however, there it is called
unpredictability.
They show a pair of KLBS-es that, 
given an unbounded computable function $g$,
win on any sequence that, for every $\ell$, has a prefix of length $\ell$ whose monotone complexity is upper bounded by $\ell-g(\ell)$.
(Theorem 9.1 in \cite{Muchnik})

In \cite{Merkle} a pair of KL betting strategies is defined, that, given a computable partition of
positions into two infinite sets, wins on any sequence whose sub-sequence on
positions in the first set, and sub-sequence on positions in the second set,
 are
both non-MLR. (12 Theorem in \cite{Merkle})

In \cite{Open} a more restrictive kind of betting strategy than KLBS is proposed,
the betting strategy has to bet on positions non-adaptively, in some computable order.
The proofs in \cite{Laurent}, \cite{Lempp}, \cite{Merkle} on permutation randomness,
and methods used in \cite{BPP_EXP} are similar to the ones we'll use
to prove item \ref{lower_bounded_splits} from the abstract (theorem \ref{th_lower_bound}).

\subsection{Main result}
We'll prove item \ref{upper_bounded_splits} from the abstract,
for a more general kind of betting strategy than KLBS.

To make a bet, a KLBS chooses a position, and places a wager on the value of the bit at that position. The outcome of the bet reveals the bit value the sequence has at the chosen position, and the capital is updated.
Note that choosing a position splits a set of sequences into two clopen sets,
the ones that have $0$ at the chosen position, and the ones that have $1$.

A general betting strategy splits a set of sequences into any two clopen sets $v_0,v_1$, and places a wager on one of those sets.
The outcome of the bet reveals in which of the two sets the sequence is in,
and the capital is updated.
Unlike for KLBS, the sets $v_0,v_1$ might have unequal uniform Lebesgue measure (size), and in case the general betting strategy correctly guesses
in which set the sequence is in, the wagered amount might be more (or less)
than doubled depending on the size of the set on which the wager was placed.
Namely, let $v$ be the set of sequences consistent with outcomes of previous bets, and
let $c$ be the capital after the last one.
We say that $v$ is a part of the betting strategy, and $c$ is its capital.

The betting strategy makes a new bet by splitting $v$ into $v_0,v_1$ and
places a wager $w\leq c$ on one of them (say $v_0$).
If the sequence is in part $v_1$, the wager is lost, and its capital is $c-w$. 
If the sequence is in $v_0$, the wager is increased by dividing it with the size of $v_0$, conditional on $v$. The capital of part $v_0$ is then $c-w+\frac{1}{\lambda(v_0|v)}w$.

For a sequence of bets, and positions $[1,\ell]$, we will look at the number of times the bets have split the set of sequences in a way that depends on the first $\ell$ bits of the sequence.
More precisely, for a "tail" sequence $\rho$,
we will count the number of times a bet was made that splits
a part
into two parts so that both contain a sequence that, after first $\ell$ positions, ends with $\rho$.
Let this number be $N$,
and let $\sigma$ be an infinite binary sequence that
is consistent with the sequence of betting outcomes,
and, 
after first $\ell$ positions ends with the tail $\rho$.
We'll say that $\sigma$ was $N$ times $[1,\ell]$-split on by the betting strategy.
Clearly, $N\leq 2^\ell$.

We'll say that a betting strategy has splits upper bounded by function $f$
if, almost surely, a sequence was  $[1,\ell]$-split on at most $f(\ell)$ many times, for almost all $\ell$.

We show that if a pair of general betting strategies has splits upper bounded by $\ell-\log\ell-g(\ell)$, where $g$ is computable and unbounded,
then there is a non-MLR sequence on which neither betting strategy wins
(theorem \ref{th_upper_bound}).

Note that a KLBS has the following property:
for a finite set of positions $I$,
and a sequence of binary values $\rho$,
if a part $v$ is split into two parts $v_0,v_1$ so that
both contain sequences that on positions outside $I$ have values $\rho$,
then the number of such sequences in $v_0$ and in $v_1$ is the same.
We'll say that betting strategies with this property are half-splitting.
Clearly, a sequence can be $[1,\ell]$-split on at most $\ell$ many times
by a half-splitting betting strategy.

We show that if a pair of half-splitting betting strategies has splits upper bounded by $\ell-g(\ell)$, where $g$ is computable and unbounded,
then there is a non-MLR sequence on which neither betting strategy wins
(theorem \ref{th_upper_bound_half-splitting}).

\section{Notation}
We denote the set of (finite binary) strings with $\strings$,
the empty string with $\Lambda$,
the strings of length $\ell$ with $\lstrings$, 
the set of infinite binary sequences with $\bseqs$. When there is no confusion, we abbreviate infinite binary sequence to sequence.
We denote that $s$ is a prefix of a string (or a sequence) $s'$ with $s\prefixof s'$. The length of a string $s$ is denoted with $|s|$.
The set of all sequences prefixed by a string $s$ is denoted with $[s]$, such sets are called basic sets. 
A union of possibly infinitely many basic sets is called an open set.
The complement of an open set is called a closed set. 
A union of finitely many basic sets is called a clopen set.

We'll use \textit{size} as a shorthand for the uniform Lebesgue measure.
We'll denote with $\lambda$  the size of a set of sequences (e.g. $\lambda([s])=2^{-|s|}$).

The set of natural numbers is denoted with $\bbN$, $\bbN^k$ denotes the set of sequences of natural numbers of length $k$, and $\bbN^*$ the set of finite sequences of natural numbers. The empty sequence of natural numbers is denoted with $\_$.

To stress that we're taking a union of disjoint sets, we use $\sqcup$ instead of $\cup$.

\begin{definition}
	A map from a subset of $\bbN$ to binary values is called a
	\textit{restriction}.
	For a restriction with domain $I\subseteq\bbN$, we'll say 
	that the restriction \textit{restricts $I$}, and that \textit{$I$ are positions restricted by} the restriction. 
	A restriction with an empty domain is called the \textit{empty} restriction.
	A restriction with a finite domain is a \textit{finite} restriction.
	We denote the set of all restrictions that restrict
	$I$ with $\{0,1\}^I$.
	
	We'll say that a sequence \textit{is consistent with} the restriction $r\in\{0,1\}^I$ when
	the sequence has the same binary values at positions in $I$ as the restriction $r$. We denote with $[r]$ the set of sequences consistent with $r$.
	
	If two restrictions map the same position to different binary values,
	we'll say they are \textit{inconsistent}.
	 
	Let $r_1,r_2$ be two restrictions that restrict two disjoint sets of positions $I_1,I_2$. 
	We denote with $r_1\hat{}r_2$ the restriction $r$ that restricts positions $I_1\sqcup I_2$ such that $r$ restricts the positions in $I_1$ to the same values as $r_1$, and positions in $I_2$ to the same values as $r_2$. We say that $r$ is the concatenation of $r_1$ and $r_2$.
	
	Let $r_1,r_2$ be two restrictions that restrict two nested sets of positions $I_1\subseteq I_2$.
	If $r_2$ has the same values as $r_1$ on positions in $I_1$,
	we say that $r_2$ is an extension of $r_1$, and we write
	$r_1\prefixof r_2$.	
\end{definition}
	Note that every sequence is consistent with the empty restriction,
	and that the set of sequences consistent with a finite restriction is clopen. 
	
	If two restrictions are inconsistent, then the sets of sequences consistent with the restrictions are disjoint.

	A set of sequences consistent with concatenation of two restrictions is the intersection of sequences consistent with one restriction with sequences consistent with the other restriction. 

	If restriction $r'$ is an extension of $r$, then the set of sequences consistent with $r'$ is a subset of sequences consistent with $r$.

\section{Definitions}

\begin{definition}
	A \textit{partition refinement} of a set $\Omega$
	is a partial function $S$
	that maps $t,x\in\bbN\times\strings$ to nonempty subsets of $\Omega$.
	Instead of writing $S(t,x)$ we'll write $S^t(x)$.
	When defined, we call the set $S^t(x)$ a \textit{part} of the partition refinement $S$ \textit{at time $t$, with the coordinate $x$}.
	It has the following properties:
	\begin{itemize}
		\item 
		The empty string is mapped to the whole set $\Omega$, that is, $S^0(\Lambda)=\Omega$
		
		\item
		If $S$ is defined on $t,x$, then for all $t'>t$ it's value remains the same, that is, $S^{t'}(x)=S^t(x)$.

		\item 
		If $S$ is defined on $t,x0$, then it is also defined on $t-1,x$ and on $t,x1$. Furthermore, $\{S^t(x0),S^t(x1)\}$ is a partition of $S^t(x)$.
	\end{itemize}
	If for some $t,x$, $S^t(x)$ is defined and
	$S^{t}(x0), S^{t}(x1)$ are undefined, we'll call both the coordinate $x$, and the part $S^t(x)$ \textit{terminal at $t$} (w.r.t. the partition refinement $S$). 
	If $S^{t}(x)$ is a terminal part at $t-1$, but not at $t$,
	we say that this part
	\textit{is split at $t$ into two parts, $S^{t}(x0), S^{t}(x1)$} .\\
	For a part $S^t(x)$, we say that the part \textit{was split $|x|$ many times}.
	
		Let $S,S'$ be partition refinements	of the same set $\Omega$, such that 
		for all coordinates for which $S$ is still undefined at time $t$,
		$S'$ remains undefined at all times, and for all the remaining coordinates
		$S'$ is the same as $S$.
		We'll say that $S'$ is \textit{$S$ up to time $t$}.
		
\end{definition}
\begin{definition}
	A \textit{mass function} $\mu$ is a map from strings to non-negative reals
	with the property that for any string $x$, $\mu(x)=\mu(x0)+\mu(x1)$.
\end{definition}
\newcommand{\BS}{\text{BS}}
\begin{definition}[Betting Strategy]
	Let $S$ be a partition refinement of the set of infinite binary sequences whose range are clopen subsets of $\bseqs$.
	Let $\mu$ be a mass function.
	A pair $\BS=(S,\mu)$ is called a \textit{betting strategy}.
	A betting strategy is \textit{computable}, if $S$ is computable and, 
	for any coordinate $x$, 
	if there is some $t$ such that $S^t(x)$ is defined,
	then $\mu(x)$ halts.
	
	For the betting strategy $\BS$ and a coordinate $x$ if there is some $t$ such that the part $S^t(x)$ is defined, the \textit{capital} of the coordinate $x$ is defined as the value $c(x)=\frac{\mu(x)}{\lambda(S^t(x))}$.
	
	The \textit{maximum capital}, $\hat{c}$,  of a coordinate $x$ is the maximum of capital over all coordinates that prefix $x$, that is,
	$\hat{c}(x)=\max_{x'\prefixeqof x}c(x')$.
	
	For a sequence $\sigma$ and a coordinate $x$ such that $x$ is terminal at $t$ and $\sigma\in S^t(x)$,
	we'll say that the betting strategy up to time $t$ \textit{achieved capital}
	$\hat{c}(x)$ \text{(when betting) on} $\sigma$.
	
	The limit of achieved capital up to time $t$ on a sequence $\sigma$, when $t$ goes to infinity,
	is called 
	\textit{the achieved capital of the betting strategy on $\sigma$}. 
	If the achieved capital is unlimited, the betting strategy \textit{wins on $\sigma$}. 
\end{definition}

\begin{definition}
	A \textit{Kolmogorov-Loveland (KL) partition refinement} is a partition refinement of sequences whose range are sets of sequences that are consistent with some restriction.
\end{definition}

\begin{definition}
	A \textit{Kolmogorov-Loveland betting strategy (KLBS)} is a betting strategy that has a KL partition refinement.
	A sequence is \textit{Kolmogorov-Loveland random (KLR)} if no computable KLBS wins on it.
\end{definition}

\begin{remark}
	This definition of KLBS is equivalent to the more standard one, given in the introduction.
	
	For a partition refinement of sequences $S$ and the coordinate $\Lambda$, $S^0(\Lambda)$ is the set consistent with the empty restriction. 
	Suppose at some $t>0$,  the part $v=S^0(\Lambda)$ is split into parts
	$v_0=S^t(0),v_1=S^t(1)$.
	If $S$ is a KL partition refinement,
	then there are two restrictions $r_0,r_1$ such that
	$v_0=[r_0]$ and $v_1=[r_1]$.
	
	Both $r_0$ and $r_1$ can restrict only one position,
	as otherwise $\lambda(v_0)+\lambda(v_1)<\lambda(v)$,
	and $\{v_0,v_1\}$ would not be a partition of $v$,
	and for the same reason $r_0,r_1$ must restrict the same position
	to different values, as otherwise the sets $v_0,v_1$ intersect.
	
	Thus, from the KL partition refinement we can obtain the function that chooses the next position to bet on, and vice-versa.
\end{remark}

\begin{definition}
	Let $f$ be a computable function from $\bbN\times\bbN$ to basic sets.
	Let $n\in\bbN$ and let $f_n$ be the union over all $k$ of basic sets $f(n,k)$.
	
	If for all $n$, the size of the (open) set $f_n$ is less than $2^{-n}$,
	and $f_{n+1}\subseteq f_n$, we call $f_n$ the $n$-th level of the ML test.
	If a sequence is in the $n$-th level of the ML test, we say it fails the test at the $n$-th level, and if it fails at all levels we say it fails the test.
	The intersection $\bigcap_n f_n$ is called an effective nullset, this is precisely the set of sequences that fail the ML-test.
	
	A sequence is non-\textit{Martin-L\"of random (MLR)} 
	if it it fails some Martin-L\"of test 
	(equivalently, the sequence is in some effective nullset).
\end{definition}

\section{Finite game}
In section \ref{sec_Basic_game}, from a pair of betting strategies, we construct an open set that has small size and contains a sequence
on which neither betting strategy wins, if the betting strategies have splits bounded in a certain way.

The basic sets that get chosen into this open set depend on the bets of the strategy pair.
We can view this as a game between a player that constructs an open set 
by choosing a small amount of basic sets against
a player that constructs a pair of betting strategies that must observe the bound on splits.
We'll show that the first player has a way of winning in this game,
that is, some of the chosen basic sets will contain sequences on which neither betting strategy achieves capital larger than some fixed treshold.

For the purpose of the analisys of this construction
we will look at the projection of this game on the set of sequences
consistent with a restriction that leaves only finitely many positions unrestricted.

\begin{definition}
	Let $S$ be a partition refinement of sequences.
	We say that a restriction $r$ is \textit{elementary at $t$ w.r.t. $S$} 
	if there is terminal part $v$ (at $t$ w.r.t. $S$) that contains $[r]$.
\end{definition}
\begin{remark}
	A restriction that restricts all positions is elementary  w.r.t. any partition refinement of sequences at all times because the set of sequences consistent with such restriction contains a single sequence.
\end{remark}

\begin{definition}
	Let $S$ be a partition refinement of sequences, and
	let 
	$I,J$ be two disjoint finite subsets of positions.
	
	We'll say that a restriction $r\in\{0,1\}^J$ is \textit{$I$-granular at $t$ w.r.t. $S$} if 
	for all $s\in\{0,1\}^I$ the restriction $s\hat{}r$ is elementary (at $t$ w.r.t. $S$).
\end{definition}

\begin{definition}	
	Let $r$ be an $I$-granular restriction at $t$ w.r.t. $S$
	and 
	let $v$ be a part of a partition refinement $S$ that is split at $t$
	into parts $v_0,v_1$.
	We'll say that \textit{$v$ is split on $r$ at $t$ w.r.t. $S$} when
	both $v_0$ and $v_1$ intersect $[r]$.
	
	Let $x$ be a coordinate of a part $v$ that is defined at $t$ and
	$N$ the number of coordinates $x'\prefixof x$ such that the part
	with the coordinate $x'$ is split on $r$ (at some $t'<t$).
	We say that the part $v$ \textit{was split on $r$ $N$ times}.
	
\end{definition}

\begin{definition}
	Let $I$ be a finite subset of $\bbN$.
	We denote with $R_I$ the set of restrictions that restrict all of the positions except the ones in $I$, that is $R_I=\{0,1\}^{\bbN\setminus I}$.
\end{definition}

\begin{remark}
	For any subset of positions $I$ a restriction that restricts
	all positions except the ones in $I$ is $I$-granular at all times w.r.t. any partition refinement of sequences.
\end{remark}

\begin{definition}\label{def_I-split_on}
	Let $I$ be some finite set of positions, $s$ a restriction that restricts $I$, and $\rho$ a restriction that restricts all positions
	except the ones in $I$.
	Denote the (only) string consistent with the restriction
	$s\hat{}\rho$ with $\sigma$.
	Let $N$ be the maximum,
	over parts in a partition refinement of a betting strategy
	that contain $\sigma$,
	of the number of times a part was split on $\rho$.
	
	We'll say that $\sigma$
	was \textit{$I$-split on $N$ many times 
		by the betting strategy}.
\end{definition}

\begin{definition}
	Suppose that the partition refinement of the betting strategy is such that for every $\rho$ with finitely many unrestricted positions,
	whenever a part $v$ is split on $\rho$ into parts $v_0,v_1$,
	both $v_0$ and $v_1$ contain the same number of elements from $[\rho]$.
	We'll call such betting strategies \textit{half-splitting}.
\end{definition}

\begin{remark}
	The Kolmogorov-Loveland betting strategies are half-splitting.
	Moreover, when part $v$ is split into parts $v_0,v_1$ if $v$ is split on $\rho\in R_I$,
	it is also split on every $\rho'\in R_I$ that has non-empty intersection $[\rho']\cap v$.
	Furthermore, there is a position $p\in I$ such that $v_0$ contains the sequences in $v$ that have $0$ at position $p$ and $v_1$ the sequences that have $1$
	at position $p$.
\end{remark}

\begin{definition}\label{def_proj}
	Let $S$ be a partition refinement of sequences with clopen parts,
	$I$ a finite set of positions,
	and $r$ a restriction that is $I$-granular w.r.t. $S$ at all times.
	
	We'll recursively define a partition refinement $T$ of restrictions
	in $\{0,1\}^{I}$ from
	the partition refinement $S$.
	
	Set $T^0(\Lambda)=\{0,1\}^{I}$.
	Let $u,v$ be terminal parts at $t-1$ w.r.t. $T,S$ such that $\bigsqcup_{s\in u}[s\hat{}r]=v\cap[r]$ (\textit{corresponding parts}).
	If $v$ does not split at $t$, both $v$ and $u$ remain terminal parts at $t$ w.r.t. $S$ and $T$.
	
	If $v$ does split at $t$ (into parts $v_0,v_1$), but does not split on $r$,
	part $u$ remains terminal at $t$ w.r.t. $T$ and corresponds to
	the part in $\{v_0,v_1\}$ that contains $v\cap[r]$.
	
	If $v$ splits on $r$ at $t$ (w.r.t. $S$),
	then $u$ splits at $t$ (w.r.t. $T$) into parts $u_0, u_1$ such that
	$u_0$ corresponds to $v_0$ and $u_1$ to $v_1$.
	
	We'll say that 
	$T$ is $S$ \textit{projected on $r$}
\end{definition}

\begin{definition}
	An \textit{evaluation function} 
	is a function
	$\nu^t(x)$ that maps $t,x\in\bbN,\strings$ to non-negative reals.
	It is non-decreasing both in $t$ and continuations of $x$.
	That is, for any $t\leq t',\;x \prefixeqof x'$, $\nu^t(x)\leq \nu^{t'}(x')$.
\end{definition}

\begin{definition}
	A \textit{partition evaluation} is a pair of
	a partition refinement of a finite set and an evaluation function.
\end{definition}

\begin{definition}\label{def_BS_restricted_to_rho}
	Let $\BS=(S,\mu)$ be a betting strategy, $I$ a finite set of positions and $r$ a restriction that that is $I$-granular w.r.t. $S$ at all times.
	Let $T$ be $S$ projected on $r$.
	Let $x$ be a terminal coordinate at $t$ w.r.t. $T$ and
	$x'$ a terminal coordinate at $t$ w.r.t. $S$ such that
	parts $T^t(x)$ and $S^t(x')$ correspond.
	Let $\nu^t(x)=\hat{c}(x')$.
	We'll say that the pair $(T,\nu)$ is \textit{$\BS$ projected on $r$}. 
\end{definition}
\begin{lemma}
		Let $I$ be a finite subset of $\bbN$, $\rho$ a restriction in $R_I$.
		A betting strategy projected on $\rho$ is a partition evaluation.
\end{lemma}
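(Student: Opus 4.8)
The plan is to unfold the definition of $\BS=(S,\mu)$ projected on $\rho$ and verify the two requirements separately: that $T$ (namely $S$ projected on $\rho$) is a partition refinement of the finite set $\{0,1\}^{I}$, and that $\nu$ is an evaluation function. Note first that the projection is even defined: by the remark following the definition of $R_I$, every $\rho\in R_I$ is $I$-granular w.r.t. $S$ at all times, so Definitions \ref{def_proj} and \ref{def_BS_restricted_to_rho} apply. Everything will rest on one invariant, proved by induction on $t$: the construction keeps a bijection $x\mapsto\phi_t(x)$ between the terminal coordinates of $T$ at time $t$ and those terminal coordinates $x'$ of $S$ at time $t$ with $S^t(x')\cap[\rho]\neq\emptyset$, where corresponding parts satisfy $\bigsqcup_{s\in T^t(x)}[s\hat{}\rho]=S^t(x')\cap[\rho]$.

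First I would establish this invariant together with the partition-refinement axioms for $T$. At $t=0$ both sides have the single terminal coordinate $\Lambda$, and $\bigsqcup_{s\in\{0,1\}^I}[s\hat{}\rho]=[\rho]=S^0(\Lambda)\cap[\rho]$, so $\phi_0(\Lambda)=\Lambda$ and $T^0(\Lambda)=\{0,1\}^I$. For the step at time $t$ I would process each corresponding pair $\bigl(T^{t-1}(x),S^{t-1}(x')\bigr)$ through the three cases of Definition \ref{def_proj}. If $S$ does not split $x'$, nothing changes. If $x'$ splits but not on $\rho$, then, since $S^{t-1}(x')\cap[\rho]=\bigsqcup_{s\in T^{t-1}(x)}[s\hat{}\rho]\neq\emptyset$ while by the definition of being split on $\rho$ not both children meet $[\rho]$, exactly one child does; $T$ keeps $x$ terminal and redirects $\phi$ to that child, and the defining equality for $x$ is preserved because that child has the same intersection with $[\rho]$ as $S^{t-1}(x')$, so the set $T^t(x)$ is literally unchanged (this is what guarantees stability of $T$ in $t$). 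If $x'$ splits on $\rho$ into $v_0,v_1$, then both meet $[\rho]$, and sorting each $s\in T^{t-1}(x)$ by whether $s\hat{}\rho\in v_0$ or $s\hat{}\rho\in v_1$ breaks $T^{t-1}(x)$ into two disjoint nonempty blocks whose union is $T^{t-1}(x)$; these become $T^t(x0),T^t(x1)$, corresponding to $v_0,v_1$. This gives at once the invariant and all the axioms: $T$ takes nonempty subsets of the finite set $\{0,1\}^I$ as values, $T^0(\Lambda)=\{0,1\}^I$, a split yields a two-block partition of the parent, and a coordinate's value never changes after it is defined.

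Next I would check that $\nu^t(x)=\hat c(\phi_t(x))$ is an evaluation function. Non-negativity is immediate from $c=\mu/\lambda\ge 0$. For monotonicity I would observe that the correspondent coordinate never shortens: when $S$ splits $\phi_{t-1}(x)$ but not on $\rho$ (so $x$ stays terminal), $\phi$ moves to a child, giving $\phi_{t-1}(x)\prefixof\phi_t(x)$; and when $x$ splits into $x0,x1$ this is because $S$ splits $\phi_{t-1}(x)$ on $\rho$, so each of $\phi_t(x0),\phi_t(x1)$ is a child of $\phi_{t-1}(x)$, again an extension. Chaining these, for terminal pairs with $t\le t'$ and $x\prefixeqof x'$ we obtain $\phi_t(x)\prefixeqof\phi_{t'}(x')$. (Equivalently: from $T^{t'}(x')\subseteq T^{t}(x)$ one gets the nonempty inclusion $S^{t'}(\phi_{t'}(x'))\cap[\rho]\subseteq S^{t}(\phi_{t}(x))\cap[\rho]$, and the laminar structure of $S$ together with $t\le t'$ forces $S^{t'}(\phi_{t'}(x'))\subseteq S^{t}(\phi_{t}(x))$, i.e. the same nesting of coordinates.) Since $\hat c(y)=\max_{y''\prefixeqof y}c(y'')$ is nondecreasing under extension of $y$, this yields $\nu^t(x)\le\nu^{t'}(x')$, which is exactly the required monotonicity in $t$ and in continuations of $x$.

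I expect the main obstacle to be the inductive bookkeeping of the bijection, in particular the case where $x'$ splits but not on $\rho$: one must check that exactly one child absorbs all of $S^{t-1}(x')\cap[\rho]$, so that no $T$-coordinate is created, lost, or duplicated and the value of $T^t(x)$ genuinely does not move. The remaining delicate point is converting the nesting of $T$-parts into nesting of $S$-parts in the monotonicity argument, which I would handle through non-emptiness of the $\rho$-intersection plus the laminar structure of a partition refinement, rather than by comparing the $S$-parts directly. Should the convention require $\nu^t(x)$ also for non-terminal $x$, I would set it to the value $x$ had at the last time it was terminal, which preserves both monotonicities by the same coordinate-extension argument.
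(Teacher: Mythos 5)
Your proposal is correct and follows the same route as the paper's proof, which simply asserts that $T$ is a partition refinement of the finite set $\{0,1\}^I$ and that $\nu$ inherits monotonicity from the maximal capital $\hat c$; you have filled in the inductive bookkeeping (the correspondence invariant, the "exactly one child meets $[\rho]$" case, and the fact that the corresponding $S$-coordinate only ever extends) that the paper leaves implicit. No gaps.
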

\begin{proof}
	Let $(T,\nu)$ be some betting strategy projected on $\rho$.
	Partition refinement $T$ is a partition refinement of the set $\{0,1\}^I$.
	This set is finite since it contains $2^{|I|}$ restrictions.
	The evaluation function $\nu$ is non-decreasing both in $t$ and continuations of $x$ because the maximal capital of the betting strategy also has those properties.
\end{proof}

In section \ref{sec_Basic_game}, from a pair of betting strategies $\BS_A,\BS_B$, and additional parameters: 
a finite set of positions $I$, 
an upper bound on the number of splits $N$, 
and a lower bound on the number of elements $\phi$,
we will construct
a sequence of finite restrictions, $C$.
(And with it the open set of (infinite binary) sequences consistent with those restrictions.)

For a $\rho\in R_I$, $C$ will have a subsequence of the form
$C'=s_1\hat{}r_1,\dots,s_z\hat{}r_z$,
where $s_i$ restricts $I$, and 
$r_1\prefixeqof\dots\prefixeqof r_z\prefixof\rho$.
Let partition refinements $P_A=(T_A,\nu_A)$, $P_B=(T_B,\nu_B)$ be 
the betting strategies $\BS_A,\BS_B$ projected on $\rho$.
The sequence $C'$ will have the property
that
there are 
times $t_0=0,t_1,\dots,t_z$ such that:
\begin{itemize}
	\item
	At $t_{i-1}$, the restriction $s_i$ is contained in a pair of terminal parts $a,b$ (w.r.t. $T_A,T_B$) such that both parts have split less than $N$ many times,
	have more than $\phi$ many elements, and their evaluations are less than $2$. We say that $s_i$ is $(N,\phi)$-good at $t_{i-1}$ w.r.t. $P_A,P_B$.
	\item
	Let $t_i$ be the smallest $t>t_{i-1}$ such that either part $a$
	or part $b$ split at $t$, or the evaluation of either parts becomes large (larger than $2$) at $t$. If there is such $t$, we say that $s_i$ becomes stale at $t_i$  w.r.t. $P_A,P_B$. If there is no such $t$, $t_i$ remains undefined and we
	say that $s_i$ is forever-fresh after $t_{i-1}$, and in this case,
	$s_i\hat{}r_i$ is the last restriction in $C'$.
	\item
	If $s_i$ becomes stale at $t_i$, and there are still some
	$(N,\phi)$-good restrictions in $\{0,1\}^I$ at $t_i$ w.r.t. $P_A,P_B$, one of them is the sequence $s_{i+1}$. If there are no 	$(N,\phi)$-good restrictions, then $s_i\hat{}r_i$ is the last restriction in $C'$.
\end{itemize}

$C'$ cannot have too many elements since the restrictions $s_i$ are chosen to be in the intersection of two terminal parts of $T_A,T_B$ that are large (have more than $\phi$ many elements).

Because of this, when $s_i$ becomes stale because an evaluation of a part becomes large, then also $\phi$ other restrictions are contained
in this part with large evaluation. This can happen at most $2^{|I|}/\phi$ many times per partition evaluation, 
because then, at some $t$, all of the terminal parts have large evaluation,
and every restriction in $\{0,1\}^I$ is $(N,\phi)$-bad.

Similarly, when $s_i$ becomes stale because a part was split,
then also $\phi$ other sequences are contained
in parts whose number of splits was incremented by $1$.
This can happen at most $N2^{|I|}/\phi$ many times per partition evaluation.
Therefore, for a pair of general betting strategies,
$C'$ can have at most $2(N+1)2^{|I|}/\phi$ elements.

For the half-splitting betting strategies,
we can improve the upper bound on the number of elements in $C'$.
The size of a part in 
the projection of a half-splitting betting strategy on $\rho\in R_I$,
and the number of times the part was split are related, namely, if a part 
was split $n$ many times,
its size is $2^{|I|-n}$.
Let $M$ be the smaller of $N,|I|-\log\phi$. 
Then $M$ is the upper bound on the
number of times the part was split, 
before all of the restrictions in the part become $(N,\phi)$-bad.
The number of parts that were split less than $M$ times is at most
$2^{M+1}$, so at most $2^{|I|-\log\phi+1}=2\cdot2^{|I|}/\phi$ restrictions in $C'$
become stale because a part in the projection of the half-splitting betting strategy was split.
Therefore, for a pair of half-splitting betting strategies,
$C'$ can have at most $2(2+1)2^{|I|}/\phi$ elements.

\begin{definition}\label{def_N,phi-good}
	Let $S$ be a partition refinement of a finite set $\Omega$.
	Let $P=(S,\nu)$ be a partition evaluation. 
	
	We'll say that an element of $\Omega$ is \textit{$(N,\phi)$-good} at $t$ if a part that is terminal at $t$ w.r.t. $S$ that contains the element
	was split less than $N$ times, and has more than $\phi$ many elements, and the evaluation of the part's coordinate is less than $2$.
	If the part is not $(N,\phi)$-good at $t$ it is \textit{$(N,\phi)$-bad} at $t$.
\end{definition}


\begin{definition}\label{def_stale_el}
	Let $S$ be a partition refinement of a finite set $\Omega$.
	Let $P=(S,\nu)$ be a partition evaluation. 
	
	An element of $\Omega$ becomes stale after $t$ at $t'$ w.r.t. $P$
	if $t'$ is the smallest number larger than $t$ such that 
	a part of $S$ that contains the element is split at $t'$,
	or,
	the evaluation of a part that contains the element becomes larger than $2$ at $t'$.
\end{definition}

\begin{definition}\label{def_good_bad_stale_wrt_PE_pair}
	Let $S_A,S_B$ be partition refinements of a finite set $\Omega$,
	and $P_A=(S_A,v_A), P_B=(S_B,v_B)$ a pair of partition evaluations.
	
	An element of $\Omega$ is $(N,\phi)$-good at $t$ w.r.t. $P_A,P_B$ if 
	it is  $(N,\phi)$-good at $t$ w.r.t. both $P_A$ and $P_B$.
	It is $(N,\phi)$-bad w.r.t. $P_A,P_B$ if it is $(N,\phi)$-bad w.r.t. $P_A$ or w.r.t. $P_B$.
	
%
	
	An element of $\Omega$ becomes stale after $t$ at $t'$ w.r.t. $P_A,P_B$
	if it stale after $t$ at $t'$ w.r.t. $P_A$ or w.r.t. $P_B$.
	If the element never becomes stale after $t$, we'll say that the element is \textit{forever-fresh} after $t$.
\\

	Let $C=\{s_1,\dots,s_k\}$ be a sequence of elements from $\Omega$.
	
	Let $t_0=0$ and for every $i\in[1,k]$ let $t_i$ be such that $s_i$ becomes stale after $t_{i-1}$ at $t_i$ (w.r.t. $P_A$,$P_B$) (if $s_i$ is forever-fresh after $t_{i-1}$, $t_j$ is undefined for all $j\geq i$ ).
	We say that $C$ is an \textit{$(N,\phi)$-sequence w.r.t. $P_A,P_B$} if for all
	$i\in [1,k]$, $t_{i-1}$ is defined, and  
	$s_i$ is $(N,\phi)$-good at $t_{i-1}$.
\end{definition}
\begin{proposition}\label{Ccount}
	Let $S_A,S_B$ be partition refinements of a set $\Omega$ with $2^\ell$ many elements,
	and $P_A=(S_A,\nu_A), P_B=(S_B,\nu_B)$ a pair of partition evaluations.
Let $C$ be an $(N,\phi)$-sequence w.r.t. $P_A,P_B$.
The sequence $C$ has less than $2\frac{2^\ell}{\phi}(N+1)$ elements.

Furthermore, if $S_A,S_B$ are such that whenever a part is split, it is split into two parts with equal number of elements,
then $|C|\leq 6\frac{2^\ell}{\phi}$
\end{proposition}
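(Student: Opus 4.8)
The plan is to classify every element of the sequence $C=\{s_1,\dots,s_k\}$ by the reason it becomes stale and to bound the size of each class. First I would record the bookkeeping. Since $t_i$ is by definition the least $t>t_{i-1}$ at which $s_i$ becomes stale, the times $t_0<t_1<\dots$ are strictly increasing, so distinct indices have distinct staleness times. Every $s_i$ with $i\le k-1$ has $t_i$ defined and hence becomes stale at $t_i$; only the final $s_k$ may be forever-fresh and carry no staleness event. By Definition \ref{def_good_bad_stale_wrt_PE_pair} each staleness event is triggered at $t_i$ by exactly one of four things: a part of $S_A$ containing $s_i$ splits (``split-$A$''), the analogue for $S_B$ (``split-$B$''), the evaluation of a part of $S_A$ containing $s_i$ exceeds $2$ (``eval-$A$''), or the analogue for $S_B$ (``eval-$B$''). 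Fixing a priority so that each triggered index is assigned a single type, I obtain $|C|\le 1+E_A+E_B+S_A+S_B$, where $E_A,E_B,S_A,S_B$ count the indices of each type and the $1$ absorbs the possible forever-fresh $s_k$.

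Next I would bound the evaluation classes. For an eval-$A$ index $i$, no part containing $s_i$ is split strictly between $t_{i-1}$ and $t_i$ (a split is itself a staleness event), so the terminal part $v_i$ of $S_A$ containing $s_i$ is unchanged on $[t_{i-1},t_i)$, and since $s_i$ is $(N,\phi)$-good at $t_{i-1}$ it has more than $\phi$ elements. At $t_i$ its evaluation reaches $2$, and because $\nu$ is non-decreasing both in $t$ and in continuations of the coordinate, $v_i$ and all its descendant parts keep evaluation at least $2$ from $t_i$ onward. Thus a later eval-$A$ index $j>i$, being good at $t_{j-1}\ge t_i$ in a part of evaluation below $2$, cannot have $s_j\in v_i$; since the terminal parts at $t_{j-1}$ form a partition refining $v_i$, the trigger part of $j$ is disjoint from $v_i$, and the parts $\{v_i\}$ are pairwise disjoint. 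As each has more than $\phi$ elements inside a set of size $2^\ell$, there are fewer than $2^\ell/\phi$ of them, so $E_A<2^\ell/\phi$ and symmetrically $E_B<2^\ell/\phi$.

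For the split classes I would argue by depth. A split-$A$ trigger part is split exactly once, and distinct indices have distinct staleness times, so distinct split-$A$ indices have distinct trigger parts. Each trigger part is good at $t_{i-1}$, hence has more than $\phi$ elements and has been split fewer than $N$ times, i.e. sits at depth $d\in\{0,\dots,N-1\}$ of the refinement tree. Parts at a common depth are pairwise disjoint, so at each fixed depth there are fewer than $2^\ell/\phi$ parts with more than $\phi$ elements; summing over the $N$ admissible depths gives $S_A<N\,2^\ell/\phi$, and likewise $S_B<N\,2^\ell/\phi$. Combining the four bounds, $|C|<1+2\cdot 2^\ell/\phi+2N\cdot 2^\ell/\phi$, and the strictness of the integer counts absorbs the additive $1$ to yield $|C|<2(N+1)2^\ell/\phi$.

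Finally, for the equal-splitting refinement I would sharpen only the split bound. When every split halves a part, a part split $d$ times has exactly $2^{\ell-d}$ elements, so ``more than $\phi$ elements'' forces $d<\ell-\log\phi$; together with $d<N$ this confines good parts to depth below $M:=\min(N,\ell-\log\phi)$. Since there are at most $2^d$ parts at depth $d$, the number of potential split triggers on each side is at most $\sum_{d<M}2^d<2^{M}\le 2^{\ell-\log\phi}=2^\ell/\phi$, so $S_A,S_B<2^\ell/\phi$ while $E_A,E_B<2^\ell/\phi$, giving $|C|<1+4\cdot 2^\ell/\phi\le 6\,2^\ell/\phi$. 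I expect the main obstacle to be the disjointness arguments -- in particular verifying that ``eval-bad is permanent'' genuinely forces the eval-trigger parts to be pairwise disjoint, which is exactly where the monotonicity of $\nu$ must be combined with the laminar structure of the refinement -- together with the boundary cases: parts that change between $t_{i-1}$ and $t_i$, several triggers coinciding at a single time, and the lone forever-fresh element.
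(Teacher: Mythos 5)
Your proof is correct and follows essentially the same route as the paper's: classify each staleness event as an evaluation event or a split event for one of the two strategies, bound the evaluation events by $2^\ell/\phi$ per strategy via the permanence of evaluations above $2$, and bound the split events by $N\cdot 2^\ell/\phi$ per strategy (respectively, in the half-splitting case, by the number of tree coordinates of depth below $M=\min(N,\ell-\log\phi)$). The only quibble is that your estimate $\sum_{d<M}2^d<2^{M}$ can fail when $\ell-\log\phi$ is not an integer; the safe bound is $2^{M+1}-1\leq 2\cdot\frac{2^\ell}{\phi}-1$ per strategy, which is what the paper uses and which still yields the stated $6\cdot\frac{2^\ell}{\phi}$ after adding the evaluation counts and the possible forever-fresh last element.
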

\begin{proof}
Let $a_i, b_i$ be the parts with coordinates $x_i,y_i$ that are terminal at $t_{i-1}$ w.r.t. $S_A,S_B$ and
contain $s_i$, the $i$-th element of $C$.

If $s_i$ becomes stale at $t_i$ then
\begin{enumerate}[label=(\alph*)]
	\item\label{staleA}
	\begin{enumerate}[label=(\arabic*)]
		\item\label{evalAti} the evaluation $\nu_A$ of $x_i$ becomes larger than $2$, or
		\item\label{splitAti} $a_i$ is split at $t_i$, or 
	\end{enumerate} 
	\item\label{staleB} 
	\begin{enumerate}[label=(\arabic*)]			
		\item\label{evalBti} the evaluation $\nu_B$ of $y_i$ becomes larger than $2$, or
		\item\label{splitBti} $b_i$ is split at $t_i$. 
	\end{enumerate} 
\end{enumerate} 
Since $s_i$ is $(N,\phi)$-good at $t_{i-1}$, 
both $a_i$ and $b_i$ contain more than $\phi$ many elements and both of them have split less than $N$ times.

In case \ref{staleA}\ref{evalAti}, at least $\phi$ many elements
will be in a part with evaluation larger than $2$. Since evaluation function is nondecreasing both in $t$ and in further splits, for all $t\geq t_i$ these elements remain $(N,\phi)$-bad and therefore cannot show up later in the sequence $C$. Thus \ref{staleA}\ref{evalAti} can happen for at most
$\frac{2^\ell}{\phi}$ many elements in $C$ before all of the elements in $\Omega$ become $(N,\phi)$-bad and $s_i$ is the last element in $C$.

In case \ref{staleA}\ref{splitAti}, for at least $\phi$ many elements
the number of times the partition $S_A$ has split increases by 1.
Once a part has split more than $N$ times, the elements in that part become $(N,\phi)$-bad and cannot show up later in the sequence $C$.
Therefore \ref{staleA}\ref{splitAti} can happen for at most
$\frac{2^\ell}{\phi}N$ many elements in $C$, in which case $s_i$ is the last element in $C$.

Summing \ref{staleA}\ref{evalAti} and \ref{staleA}\ref{splitAti}
we have that \ref{staleA} happens for at most $\frac{2^\ell}{\phi}(N+1)$
elements in $C$, and, by the same reasoning, this is also true for \ref{staleB}. Then, in total, $C$ has less than   $2\frac{2^\ell}{\phi}(N+1)$ many elements.

When $S_A,S_B$ are half-splitting, the analisys of the case \ref{staleA}\ref{evalAti} remains the same,
however, we can improve the bound on the size of the sequence $C$ in the case \ref{staleA}\ref{splitAti}.

If $S_A$ is such that whenever a part is split, 
it is split into two parts of equal size, then
$|a_i|=2^{\ell-|x_i|}$.
Since the element $s_i\in a_i$ is $(N,\phi)$-good at $t_{i-1}$ (w.r.t. $P_A$),
we have that $x_i$ is shorter than $N$ (otherwise the part $a_i$ has split more than $N$ times), but also
it must be shorter than $\ell-\log\phi$ as otherwise the part 
$a_i$ has size smaller than $\phi$.

Let $M=\min(N,\ell-\log\phi)$.
If $s_i$ is $(N,\phi)$-good at $t_{i-1}$
then the length of $x_i$ is at most $M$.
If $|x_i|=M$ and $a_i$ is split at $t_i$, it is split into two parts $a',a''$ with coordinates $x_i1,x_i0$,
and since $|x_i1|,|x_i0|$ are both equal to $M+1$, the elements in both parts
$a',a''$ are $(N,\phi)$-bad at all $t\geq t_i$.
But the number of coordinates shorter than $M$ is less than $2^{M+1}$,
and therefore the case \ref{staleA}\ref{splitAti} can happen at most
$2^{M+1}$ times before all of the elements in $\Omega$ become $(N,\phi)$-bad and $s_i$ is the last element in the sequence $C$.

Since $M\leq \ell-\log\phi$, the case \ref{staleA}\ref{splitAti} happens for at most
$2^{\ell-\log\phi+1}=2\frac{2^\ell}{\phi}$ many elements in $C$,
and the case \ref{staleA}\ref{evalAti} for at most
$\frac{2^\ell}{\phi}$. Together, \ref{staleA} happens for at most
$3\frac{2^\ell}{\phi}$ elements, and the same is true for \ref{staleB},
which brings us to a total of at most $6\frac{2^\ell}{\phi}$ elements in $C$.
\end{proof}

We show that, for projections of betting strategies on $\rho\in R_I$, if the bound $N$ on the number of splits is smaller than $|I|-\log\phi$ by $h$, only a small ($2^{-h}$) fraction of sequences in $[\rho]$ can be contained in parts that are smaller than $\phi$ but have split less than $N$ many times.
Of course,
for half-splitting betting strategies, this fraction is $0$.

\begin{lemma}\label{slim_approx_splitted}
	For any partition refinement of a finite set $\Omega$ with $2^\ell$ many elements, and any $t$,
	the total number of strings that are contained in terminal parts that have
	strictly less than $\phi$ many elements and have split at most $\ell-\log\phi-h$ times is less than $2^{\ell-h}$.
\end{lemma}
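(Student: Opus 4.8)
The plan is to turn the statement into a counting argument over the binary tree of coordinates. Recall that a terminal part $S^t(x)$ is defined to have been split exactly $|x|$ many times, so the hypothesis that a part split at most $\ell-\log\phi-h$ times is precisely the statement that its coordinate $x$ satisfies $|x|\le\ell-\log\phi-h$; write $k=\ell-\log\phi-h$ for this bound. The key structural observation is that the terminal coordinates at a fixed time $t$ form an antichain under $\prefixeqof$: the defined coordinates at $t$ are closed under prefixes (so they form a subtree), and a coordinate is terminal exactly when it is a leaf of that subtree, since $x$ terminal means $S^t(x0)$ and $S^t(x1)$ are undefined. Moreover the terminal parts are pairwise disjoint, so the quantity to be bounded is just the sum of the sizes of the qualifying parts, with no double counting.

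First I would bound the number $m$ of terminal parts at $t$ whose coordinate has length at most $k$. Since their coordinates form a prefix-free set, Kraft's inequality gives $\sum_x 2^{-|x|}\le 1$ over the terminal coordinates; restricting to those with $|x|\le k$ and using $2^{-|x|}\ge 2^{-k}$ for each of them yields $m\cdot 2^{-k}\le 1$, that is $m\le 2^k$ (no integrality of $k$ is needed, which is convenient since $\log\phi$ need not be an integer). One may picture the same bound as an injection sending each qualifying coordinate to a distinct extension of length $\lfloor k\rfloor$.

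Each of these $m$ parts has strictly fewer than $\phi$ elements by hypothesis, so the total number of strings lying in such parts is strictly less than $m\cdot\phi\le 2^k\cdot\phi$. Substituting $k=\ell-\log\phi-h$ collapses this to $2^{\ell-\log\phi-h}\cdot\phi=2^{\ell-h}$, the claimed bound (and if there are no qualifying parts the total is $0<2^{\ell-h}$). The only nontrivial step is the antichain bound $m\le 2^k$; everything after it is arithmetic in which the $-\log\phi$ appearing in the split bound cancels exactly against the factor $\phi$ coming from the per-part size bound. The main point to get right is that the two hypotheses combine multiplicatively — few splits controls the number of parts, few elements controls each part's size — rather than being used in isolation.
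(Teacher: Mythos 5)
Your proof is correct and follows essentially the same counting argument as the paper: bound the number of qualifying terminal parts by $2^{k}$ with $k=\ell-\log\phi-h$, multiply by the per-part bound $\phi$, and observe that the factors cancel. The only difference is that you justify the bound of $2^{k}$ on the number of terminal coordinates of length at most $k$ explicitly via prefix-freeness and Kraft's inequality, a step the paper asserts without elaboration.
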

\begin{proof}
	Let $k=\ell-\log\phi-h$
	The number of terminal coordinates for a partition $S^t$ that are shorter than $k$ is at most $2^k$. Even if all of these coordinates are mapped by $S^t$
	to parts with (strictly) less than $\phi$ many elements, the total  number of elements in such parts is less than $2^k\phi=2^{\ell-h}$. 
\end{proof}
\begin{lemma}\label{size_of_slim,subN-splits}
	Let $N=\ell-\log\phi-h$
	The number of $(N,\phi)$-bad elements at $t$ w.r.t. $P_A,P_B$ that are contained in terminal parts with evaluation below $2$
	and have split less than $N$ times,
	is at most
	$
	2^{\ell-h}
	$.
\end{lemma}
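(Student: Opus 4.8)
The plan is to reduce the statement to the single--partition estimate of Lemma~\ref{slim_approx_splitted}, applied once to $P_A$ and once to $P_B$, and to absorb the resulting factor of two by exploiting that the hypothesis demands \emph{strictly} fewer than $N$ splits.

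First I would isolate why a counted element must sit in a slim part. Fix an element $\sigma$ satisfying the hypotheses: in each of $S_A,S_B$ it lies in a terminal part (at $t$) whose evaluation is below $2$ and which has split fewer than $N$ times, and it is $(N,\phi)$-bad w.r.t. $P_A,P_B$. By Definition~\ref{def_good_bad_stale_wrt_PE_pair} this means $\sigma$ is $(N,\phi)$-bad w.r.t. at least one of the two, say w.r.t. $P_A$. Unwinding Definition~\ref{def_N,phi-good}, $(N,\phi)$-badness of $\sigma$ in $S_A$ means its containing terminal part has split at least $N$ times, or has evaluation at least $2$, or has at most $\phi$ elements. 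The first two are excluded by hypothesis, so the $S_A$-part of $\sigma$ has fewer than $\phi$ elements while having split fewer than $N$ times. Hence every counted $\sigma$ lies in
\[ A=\{\sigma:\ \sigma\text{ is in a terminal part of }S_A\text{ with fewer than }\phi\text{ elements that has split fewer than }N\text{ times}\}, \]
or in the analogous set $B$ built from $S_B$; that is, the counted elements are contained in $A\cup B$.

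Next I would bound $|A|$ and $|B|$ separately by invoking Lemma~\ref{slim_approx_splitted}. The condition ``split fewer than $N$ times'' is ``split at most $N-1$ times'', and since $N=\ell-\log\phi-h$ we have $N-1=\ell-\log\phi-(h+1)$. Thus $A$ is precisely the set of elements in terminal parts with strictly fewer than $\phi$ elements that have split at most $\ell-\log\phi-(h+1)$ times, so Lemma~\ref{slim_approx_splitted} with $h$ replaced by $h+1$ yields $|A|<2^{\ell-(h+1)}=2^{\ell-h-1}$, and likewise $|B|<2^{\ell-h-1}$. A union bound then gives $|A\cup B|\le|A|+|B|<2^{\ell-h-1}+2^{\ell-h-1}=2^{\ell-h}$, which is the claimed estimate.

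The single point needing care -- essentially the whole content of the lemma -- is this factor of two: a direct citation of Lemma~\ref{slim_approx_splitted} with the same parameter $h$ for both partitions would only give $2\cdot2^{\ell-h}$. What rescues the bound is that the hypothesis forbids exactly $N$ splits, so I may apply the slimness lemma at the strictly smaller split budget $N-1$, halving each summand so that the two together still fit inside $2^{\ell-h}$. I would also note in passing the harmless boundary discrepancy between ``more than $\phi$'' in Definition~\ref{def_N,phi-good} and ``strictly less than $\phi$'' in Lemma~\ref{slim_approx_splitted}: the counting in Lemma~\ref{slim_approx_splitted} bounds the number of short terminal coordinates independently of the exact size threshold, so treating the threshold at $\phi$ consistently (as elsewhere in the paper) leaves the estimate unchanged.
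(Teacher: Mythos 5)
Your proof is correct and follows the same route as the paper, whose own proof is just the one-line citation ``By Lemma~\ref{slim_approx_splitted}''; you supply the details that citation leaves implicit. In particular, your observation that under the stated hypotheses the badness must come from part size, and that ``split less than $N$'' means ``split at most $N-1=\ell-\log\phi-(h+1)$'', so each of the two partitions contributes at most $2^{\ell-h-1}$ and the union bound closes at $2^{\ell-h}$, is exactly the accounting needed to make the cited lemma yield the stated constant.
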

\begin{proof}
	By lemma \ref{slim_approx_splitted}. 
\end{proof}

\section{Basic game}\label{sec_Basic_game}

For a given pair of betting strategies, an interval of positions $I$, 
and
a finite restriction $z$ that restricts positions
not in $I$ 
, we will construct a set of finite restrictions $C$, the chosen restrictions.
The chosen restrictions will be extensions of $z$, and the open set
$\bigcup_{c\in C}[c]$ will have a fraction of the size of $[z]$.

If $[z]$ has a large ($>0$) subset of sequences
on which the betting strategies achieve only small ($\leq2$) capital,
and under a certain condition on the splits of the betting strategies,
some of the chosen restrictions will also have a large subset of sequences on which the betting strategies achieve small capital.

In section \ref{sec_ML_test} we'll iteratively use sets of chosen restrictions to construct levels of a ML-test that contains a 
sequence on which the betting strategy pair achieves only small capital, 
showing that there is a non-MLR sequence on which the betting strategy pair doesn't win.
\\

The restrictions are chosen in such a way that for any $\rho\in R_I$ that extends  $z$, the subset of $C$ consistent with $\rho$ will be of the form
$\{s_1\hat{}r_1,\dots,s_z\hat{}r_z\}$
where $\rho$ is an extension of all $r_1,\dots, r_z$,
and $s_1,\dots,s_z$ is an $(N,\phi)$-sequence w.r.t. the pair of betting strategies projected on $\rho$.

\begin{lemma}\label{proj_on_r<r'_eq}
	Let $\BS=(S,\mu)$ be a betting strategy and
	$r$ an $I$-granular restriction at $t$ w.r.t. $S$.
	Any $r'$ that extends $r$ and leaves positions in $I$ unrestricted
	is also $I$-granular at $t$ w.r.t. $S$.
	Furthermore,
	$\BS$ up to time $t$ projected on $r$ is equal to
	$\BS$ up to time $t$ projected on $r'$.
\end{lemma}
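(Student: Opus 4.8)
The plan is to attach to the frozen refinement $S'$ (the one that is $S$ up to time $t$) a single combinatorial invariant and to show it does not distinguish $r$ from $r'$. For a restriction $q\in R_I$ that is $I$-granular w.r.t.\ $S'$ and a terminal part $v$, set $U_q(v)=\{s\in\{0,1\}^I : [s\hat{}q]\subseteq v\}$. Since the basic sets $[s\hat{}q]$ are pairwise disjoint for distinct $s$, granularity gives $v\cap[q]=\bigsqcup_{s\in U_q(v)}[s\hat{}q]$, and so the ``corresponding parts'' condition $\bigsqcup_{s\in u}[s\hat{}q]=v\cap[q]$ of Definition \ref{def_proj} is just $u=U_q(v)$. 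Hence both the splitting decisions of the projection and its correspondence to parts of $S'$ — and therefore the evaluation $\nu^t(x)=\hat{c}(x')$ — are determined by the sets $U_q(\cdot)$, and the whole lemma reduces to the identity $U_r(v)=U_{r'}(v)$ for every terminal part $v$ of $S'$ at every time.

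First I would dispose of the granularity claim and justify passing to $S'$. Since $r\prefixof r'$ leaves $I$ unrestricted, $[r']\subseteq[r]$ and $[s\hat{}r']\subseteq[s\hat{}r]$ for every $s\in\{0,1\}^I$; thus any terminal part containing $[s\hat{}r]$ also contains $[s\hat{}r']$, so $r'$ inherits $I$-granularity at $t$ from $r$. I would then record two monotonicity facts: granularity is downward closed in time (if $[s\hat{}r]$ lies in a part terminal at $t$, then at each $t''\le t$ its ancestor is terminal at $t''$ and still contains $[s\hat{}r]$, using that the two children of a coordinate appear simultaneously), and the terminal parts of $S'$ at a time $t''$ coincide with those of $S$ at time $\min(t'',t)$. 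Together these give that both $r$ and $r'$ are $I$-granular at all times w.r.t.\ $S'$, so both projections are legitimately defined.

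The core step is the identity $U_r(v)=U_{r'}(v)$. The inclusion $U_r(v)\subseteq U_{r'}(v)$ is immediate from $[s\hat{}r']\subseteq[s\hat{}r]$. For the reverse, suppose $[s\hat{}r']\subseteq v$; granularity provides a terminal part $w$ with $[s\hat{}r]\subseteq w$, and then the nonempty set $[s\hat{}r']\subseteq[s\hat{}r]\cap v\subseteq w\cap v$ forces $w=v$, because terminal parts at a fixed time are pairwise disjoint; hence $[s\hat{}r]\subseteq v$. This is the delicate point of the argument — the place where I expect to have to be most careful is in keeping $[s\hat{}r']$ nonempty and insisting that $w$ and $v$ be terminal at the \emph{same} time so that disjointness applies.

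Finally I would run an induction on time showing that $S'$ projected on $r$ and $S'$ projected on $r'$ agree as partition refinements together with their correspondences. Using the identity twice, I would check that whenever a part $v$ splits into $v_0,v_1$, granularity at the split time forces $U_r(v)=U_r(v_0)\sqcup U_r(v_1)$, so that ``$v$ is split on $r$'' (both children meet $[r]$) is equivalent to both $U_r(v_0),U_r(v_1)$ being nonempty; by the identity this is equivalent to ``$v$ is split on $r'$'', and when the split is not on $r$ the unique child with nonempty $U$ is the same for $r$ and $r'$. Thus the two projections make identical decisions and keep the same corresponding part $v$ of $S'$ at every step; since the coordinate $x'$ of $v$ and hence $\hat{c}(x')$ are read off from $S'$ alone, the evaluation functions coincide as well, giving equality of the two projections. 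The remaining obstacle is purely bookkeeping — maintaining $u=U_r(v)=U_{r'}(v)$ as the invariant through the three cases of Definition \ref{def_proj} — and is routine once the identity $U_r=U_{r'}$ is in hand.
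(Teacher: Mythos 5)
Your proof is correct and takes essentially the same route as the paper's: first show $r'$ inherits $I$-granularity from $r$, then argue that the two projections make identical splitting decisions and correspond to the same parts of $S$ with the same coordinates at the same times, so the evaluation functions coincide. The only difference is one of rigor: the paper simply asserts that the corresponding parts agree, while your invariant $U_r(v)=U_{r'}(v)$, proved via the disjointness of terminal parts at a fixed time, supplies the verification the paper leaves implicit.
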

\begin{proof}
	Since $r$ is $I$-granular (at $t$ w.r.t. the partition refinement of $\BS$),
	for all $s\in\{0,1\}^I$, $s\hat{}r$ is elementary,
	and if $s\hat{}r$ is elementary so is $s\hat{}r'$.
	Therefore, $r'$ is also $I$-granular.
	
	Furthermore, for any part $v$, defined by the partition refinement
	of the $\BS$ up to time $t$, if the intersection $v\cap[r]$ is non-empty, 
	then, by definition \ref{def_proj},
	$v$ has corresponding parts $u,u'$ in projections of the partition refinement on $r,r'$, both having the same coordinate and defined at the same time as part $v$.
	By definition \ref{def_BS_restricted_to_rho},
	the evaluation functions of projections of the $\BS$ up to time $t$
	on $r,r'$ are the same as well, for all coordinates and times.
	But then, the projections of $\BS$ up to time $t$ on $r,r'$ are the same.
\end{proof}

\begin{definition}\label{def_good_bad_stale_wrt_BS_pair}
	Let $I$ be some finite set of positions and
	let $\BS_A=(S_A,\mu_A)$, $\BS_B=(S_B,\mu_B)$ be a pair of betting strategies.
	Let
	$s$ be a restriction that restricts $I$ and 
	$r$ an $I$-granular restriction at $t$ w.r.t. $S_A,S_B$. 
	
	The restriction $s\hat{}r$ is \textit{$(I,(N,\phi))$-good at $t$ w.r.t. $\BS_A,\BS_B$} iff $s$ is $(N,\phi)$-good at $t$ w.r.t. both projections on $r$ of $\BS_A$ and $\BS_B$ up to time $t$. The restriction $s\hat{}r$ is $(I,(N,\phi))$-\textit{bad} (at $t$ w.r.t. $\BS_A,\BS_B$) if it is not $(I,(N,\phi))$-good.
	
	We'll say that $s\hat{}r$ \textit{becomes $I$-stale after $t$ at $t'$ w.r.t. $\BS_A,\BS_B$} iff $s$ becomes stale after $t$ at $t'$ w.r.t. both projections on $r$ of $\BS_A$ and $\BS_B$ up to time $t'$.
\end{definition}

To make the construction of the set of chosen sequences exact,
for a given betting strategy pair, at time $t$,  
we use a uniquely defined set of $I$-granular restrictions, called the
common set of  $I$-granular restrictions at time $t$.
\begin{lemma}\label{maxpos}
	Let $S$ be a partition refinement whose parts are clopen.
	For every $t$ there is a set of positions $K$ such that all of the restrictions in $\{0,1\}^K$ are elementary at $t$ w.r.t. $S$.
\end{lemma}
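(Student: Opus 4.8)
The plan is to show that at any fixed time $t$ the terminal parts of $S$ form a \emph{finite} partition of $\bseqs$ into clopen sets, and then to exploit the elementary topological fact that finitely many clopen sets can all be cut out by cylinders of one common length $L$; taking $K=\{1,\dots,L\}$ will then do the job.

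First I would pin down the combinatorial structure of $S$ at time $t$. The coordinates $x$ for which $S^t(x)$ is defined are prefix-closed, and by the definition of a terminal part together with the partition clause ``$\{S^t(x0),S^t(x1)\}$ is a partition of $S^t(x)$'', every defined coordinate is either terminal or has both children defined; so these coordinates form a binary tree in which each internal node has exactly two children. The clause ``$S^t(x0)$ defined implies $S^{t-1}(x)$ defined'' forces a coordinate of length $n$ to remain undefined before time $n$, whence this tree has depth at most $t$ and is therefore finite; in particular there are only finitely many terminal coordinates at $t$. These terminal parts partition $\bseqs$: any two distinct terminal coordinates branch apart at their longest common prefix $w$, where $S^t(w0)$ and $S^t(w1)$ partition $S^t(w)$, so the two terminal parts lie in disjoint halves of $S^t(w)$ and are themselves disjoint; and every $\sigma\in\bseqs$ can be traced down the tree from $S^0(\Lambda)=\bseqs$, at each internal node passing into the unique child containing it, until it reaches a leaf, so the terminal parts also cover $\bseqs$.

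Next I would use that each terminal part, being clopen, is a finite union of basic sets $[s]$, and that any such union can be rewritten with cylinders of a single length: replace each $[s]$ by the union of those $[s']$ with $s\prefixeqof s'$ and $|s'|=\ell_0$ for $\ell_0\geq|s|$. Carrying this out for each of the finitely many terminal parts and letting $L$ be the maximum of the lengths so used, every terminal part becomes a union of length-$L$ cylinders. Finally, set $K=\{1,\dots,L\}$. A restriction $r\in\{0,1\}^K$ is just a length-$L$ string and $[r]$ is exactly the corresponding length-$L$ cylinder; since the terminal parts partition $\bseqs$ and each is a union of length-$L$ cylinders, $[r]$ must lie entirely inside one terminal part, so $r$ is elementary at $t$. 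As this holds for every $r\in\{0,1\}^K$, the lemma follows.

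The step that requires the most care is the first, structural one: verifying from the axioms of a partition refinement that at a fixed $t$ the defined coordinates really do form a finite full binary tree whose leaves tile $\bseqs$ (in particular that no node has a single defined child, and that the depth bound yields finiteness). Once that is secured, the ``common cylinder length'' argument and the containment check are routine, so I expect the structural lemma about the tree of parts to be the main obstacle.
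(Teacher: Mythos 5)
Your proof is correct and follows essentially the same route as the paper's: finitely many clopen terminal parts at time $t$, a common cylinder length $L$ so that each is a union of length-$L$ basic sets, and then $K=\{1,\dots,L\}$ works. The only difference is that you explicitly verify the structural facts (finiteness of the tree of defined coordinates via the depth-at-most-$t$ bound, and that the terminal parts partition $\bseqs$) which the paper simply asserts or uses implicitly.
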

\begin{proof}
	A basic set is a set of sequences that extend some string.
	A clopen set consists of finitely many basic sets.
	There are finitely many parts terminal at $t$ w.r.t. $S$. 
	These are clopen sets and therefore there is a minimal length of a string $\ell$
	such that each terminal part can be represented as a union of basic sets that extend strings of length $\ell$.
	
	Let $K=[1,\ell]$, and $r$ a restriction in $\{0,1\}^K$.
	The set $[r]$ is a set of sequences that extend a string of length $\ell$, and is therefore a subset of some terminal part (at $t$ w.r.t. $S$).
	That is, $r$ is an elementary restriction (at $t$ w.r.t. $S$). 
\end{proof}
\begin{lemma}\label{minsetpos}
	Let $S$ be a partition refinement of sequences whose parts are clopen. For every $t$ there is a finite set of positions $K$ such that all of the restrictions in $\{0,1\}^K$ are elementary at $t$ w.r.t. $S$, and for every other set of positions $L$, if $\{0,1\}^{L}$ is a set of elementary restrictions, then $K\subseteq L$.
\end{lemma}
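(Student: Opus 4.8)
The plan is to exhibit $K$ as the minimum of the family $\mathcal{L}$ of all sets of positions $L$ for which every restriction in $\{0,1\}^L$ is elementary at $t$ w.r.t. $S$. By Lemma~\ref{maxpos} this family is non-empty: it contains $K_0=[1,\ell]$ for a suitable $\ell$. Two structural facts frame the argument. First, $\mathcal{L}$ is upward closed: if $L\in\mathcal{L}$ and $L\subseteq L'$, then for $r'\in\{0,1\}^{L'}$ its restriction to $L$ lies in $\{0,1\}^L$ and $[r']$ is a subset of the corresponding $[r]$, hence contained in a single terminal part, so $r'$ is elementary. Second, and this is the crux, $\mathcal{L}$ is closed under intersection; granting this, the minimum is easy to extract.

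For the intersection-closure claim, let $L_1,L_2\in\mathcal{L}$ and take $r\in\{0,1\}^{L_1\cap L_2}$; I must show any two sequences $\sigma,\sigma'\in[r]$ lie in the same terminal part. The idea is to interpolate through the hybrid sequence $\tau$ that agrees with $\sigma$ on $L_1$ and with $\sigma'$ off $L_1$. Then $\sigma$ and $\tau$ agree on $L_1$, so they share a part because $L_1\in\mathcal{L}$. Moreover $\tau$ agrees with $\sigma'$ on all of $L_2$: off $L_1$ this holds by construction, while on $L_1\cap L_2$ both $\tau$ and $\sigma'$ equal $\sigma$ there, using $\sigma=\sigma'$ on $L_1\cap L_2$. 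Hence $\tau$ and $\sigma'$ share a part because $L_2\in\mathcal{L}$, and transitivity gives $\sigma,\sigma'$ in a common part. Thus $\{0,1\}^{L_1\cap L_2}$ is a set of elementary restrictions, so $L_1\cap L_2\in\mathcal{L}$, and by induction $\mathcal{L}$ is closed under finite intersections.

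To extract the minimum, restrict attention to those members of $\mathcal{L}$ contained in $K_0$; there are only finitely many (they are subsets of the finite set $K_0$), the set $K_0$ itself is among them, and by intersection-closure their intersection $K$ again lies in $\mathcal{L}$ with $K\subseteq K_0$, so $K$ is finite and $\{0,1\}^K$ is a set of elementary restrictions. For minimality, let $L$ be any set of positions with $\{0,1\}^L$ a set of elementary restrictions. By intersection-closure $L\cap K_0$ is elementary and $L\cap K_0\subseteq K_0$, so $L\cap K_0$ is one of the finitely many sets whose intersection defines $K$; therefore $K\subseteq L\cap K_0\subseteq L$. This is exactly the required minimality, and since $L$ was arbitrary it covers infinite $L$ as well.

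The main obstacle is the intersection-closure claim; everything else is bookkeeping around Lemma~\ref{maxpos}. Its delicate point is that one cannot in general connect $\sigma$ to $\sigma'$ by flipping coordinates one at a time (there may be infinitely many differences, and part membership need not be monotone under single flips), so switching an entire coordinate block at once through the hybrid $\tau$ is what lets the clopen structure carry the argument. A cleaner but essentially equivalent route is to define $K$ directly as the set of positions $p$ admitting two sequences that differ only at $p$ yet fall in distinct terminal parts; there the finiteness of $K$ and the elementariness of $\{0,1\}^K$ are established by a finite interpolation over the finitely many positions that determine the clopen parts, and the minimality is immediate from the defining property of $p\in K$.
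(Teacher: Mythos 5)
Your proof is correct and follows essentially the same route as the paper: both reduce the statement to closure under intersection of the family of position sets all of whose restrictions are elementary, and both establish that closure by a two-step interpolation --- your hybrid sequence $\tau$ plays exactly the role of the paper's intermediate restriction $o\hat{}p\hat{}r$, which lies in the same terminal part as $[o\hat{}p]$ and as $[o\hat{}r]$. The only difference is presentational: you spell out the extraction of the minimum from the finitely many members of the family contained in the set supplied by Lemma~\ref{maxpos} (and the case of infinite $L$), which the paper leaves implicit.
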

\begin{proof}
	By lemma \ref{maxpos} for all $S,t$ there is a finite set of positions such that the restrictions that restrict positions in that set are elementary at $t$.
	To prove the lemma, it is enough to show that for two finite sets of positions $K,L$ if both
	$\{0,1\}^K$ and $\{0,1\}^L$ are sets of restrictions elementary (at $t$, w.r.t. $S$)
	then $\{0,1\}^{K\cap L}$ is also a set of elementary restrictions.
	
	Let $o\in\{0,1\}^{K\cap L}$, $p\in\{0,1\}^{K\setminus L}$ and $r\in\{0,1\}^{L\setminus K}$.
	The restrictions $o\hat{}p$  and $o\hat{}r$ are both elementary since they are elements of $\{0,1\}^K$ and $\{0,1\}^L$, respectively.
	The restriction $o\hat{}p\hat{}r$ is also elementary, since 
	it extends the elementary restrictions $o\hat{}p$ and
	$o\hat{}r$. 
	For any $p\in\{0,1\}^{K\setminus L}$, $[o\hat{}p\hat{}r]$ is
	a subset of the terminal part that contains $[o\hat{}r]$,
	and for any $r\in\{0,1\}^{L\setminus K}$, $[o\hat{}p\hat{}r]$ is
	a subset of the terminal part that contains $[o\hat{}p]$.
	We conclude that there is a terminal part that contains $[o\hat{}p\hat{}r]$, for all $p,r\in\{0,1\}^{K\setminus L},\{0,1\}^{L\setminus K}$.
	But then, $[o]$ is a subset of that terminal part, that is, $o$ is an elementary restriction.
\end{proof}

\begin{definition}
	Let $S$ be a partition refinement of sequences whose parts are clopen. Let $K(S,t)$ denote the finite set of positions such that all of the restrictions in $\{0,1\}^{K(S,t)}$ are elementary at $t$ w.r.t. $S$, and for every other set of positions $L$, if $\{0,1\}^{L}$ is a set of elementary restrictions, then $K(S,t)\subseteq L$. By lemma \ref{minsetpos} $K(S,t)$ is defined for any $S,t$. We'll call $K(S,t)$  \textit{the positions inspected by $S$ up to time $t$}. 
\end{definition}

\begin{definition}	
	Let $A,B$ be a pair of partition refinements of sequences whose parts are clopen. 
	
	Let $K(A,B,t)$ denote the union of
	positions inspected by $A$ and by $B$ up to time $t$. We'll call $K(A,B,t)$ the set of \textit{positions inspected by $A,B$ up to time $t$}. 
	
	We'll say that a restriction $r\in\{0,1\}^J$ is \textit{$I$-granular at $t$ w.r.t. $A,B$} if it is $I$-granular at $t$ w.r.t. both $A$ and $B$.
	
	Let $K$ be the set of positions inspected by $A,B$ up to time $t$.
	We'll call the set of restrictions $\{0,1\}^{K\setminus I}$
	\textit{the common set of $I$-granular restrictions at $t$ w.r.t. $A,B$}.
\end{definition}

We'll now construct a choice function $C$ that, 
as time progresses, chooses finite restrictions into the chosen set.
We index the chosen restrictions with  finite sequences of numbers,
and for $m\in\bbN$, when $C(m)$ is defined, we'll say that $C(m)$ is 
the $m$-th chosen restriction. We denote the time when the $m$-th restriction was chosen with
$T(m)$.

The parameters for our construction are:
a pair of betting strategies $\BS_A,\BS_B$,
a finite set of positions $I$,
a restriction $z$ that doesn't restrict positions in $I$,
a lower bound on the number of splits $N$, and
an upper bound on the number of elements of a part in the finite game $\phi$.

Recall that $\_$ denotes the empty sequence of numbers.
The first chosen restriction $C(\_)=s_{\_}\hat{}r_{\_}$, is the concatenation of 
the restriction $s_{\_}$ that restricts all positions in $I$ to $0$
and the restriction $r_{\_}=z$.
The restriction $C(\_)$ is chosen at time $T(\_)=0$,
when partition refinements of both betting strategies have only one part defined that is the entire set of sequences.
It has the following property, that all of the chosen restrictions will have:
\begin{itemize}
	\item the restriction $C(\_)$ is a concatenation of two restrictions,
	$s_{\_}$ that restricts $I$, called the head of the chosen restriction and
	$r_{\_}$ called the tail.
	\item The tail
	is $I$-granular w.r.t. $\BS_A,\BS_B$ at time when the restriction was chosen.
	\item The head
	is $(N,\phi)$-good w.r.t. $\BS_A,\BS_B$ projected on the tail at the time when the restriction was chosen.
	\item For the tail, the head is uniquely defined,
	it is the lexicographically least $(N,\phi)$-good restriction w.r.t. $\BS_A,\BS_B$ projected on the tail at the time when the restriction was chosen.
\end{itemize}

In other words, the chosen restriction $C(m)=s_m\hat{}r_m$ 
is $(I,(N,\phi))$-good w.r.t. $\BS_A,\BS_B$ at the time when it was chosen, 
and the choice of head $s_m$ is uniquely defined for the tail $r_m$.

If at some $t>T(m)$, for some common $I$-granular restriction $r'$ that extends the tail $r_m$, the restriction $s_m\hat{}r'$ becomes $I$-stale w.r.t.
$\BS_A,\BS_B$, 
and there are still some $(N,\phi)$-good heads
w.r.t. $\BS_A,\BS_B$ projected on $r'$,
then
another restriction is chosen, with lexicographically least head $s'$ such that $s'\:\hat{}r'$ is $(I,(N,\phi))$-good w.r.t. $\BS_A,\BS_B$ at time $t$.
Suppose that $r'$ is the $n$-th such extension of $r_m$ so far.
Let $m'=m,n$ and 
set $C(m')=s'\:\hat{}r'$ and $T(m')=t$.

\begin{definition}
	Let  $\BS_A=(S_A,\mu_A),\BS_B=(S_B,\mu_B)$ be a pair of betting strategies.
	Let $I$ be a finite subset of positions 
	and $z$ a finite restriction such that none of the restricted positions are in $I$.
	Let $N,\phi$ be natural numbers.
	
	We'll recursively define a partial map $C$ from finite sequences of natural numbers to restrictions together with an auxilliary partial map $T$ from $\bbN^*$ to $\bbN$. For $m\in\bbN^*$, we'll say that
	$C(m)$ is \textit{the $m$-th chosen restriction which was chosen at $T(m)$ in the basic game on $I,z$ against $BS_A,BS_B$ with parameters $N,\phi$}.
	
	$C(\_)$ is the extension of restriction $z$ that restricts positions in $I$ to zero
	and $T(\_)=0$.
	
	Suppose that for some argument 
	$m\in\bbN^*$ the maps $C$ and $T$ are already defined and
	$C(m)=s\hat{}r$, where $s$ restricts positions in $I$, and $r$ is $I$-granular at $T(m)$ w.r.t. $S_A,S_B$.
	For $t\geq T(m)$, let 
	$F^t(m)$ denote the subset of 
	the common set of $I$-granular restrictions at $t$ w.r.t. $S_A,S_B$,
	such that $r'\in F^t(m)$ extends $r$ and $s\hat{}r'$ becomes $I$-stale after $T(m)$ at $t$.
	We'll say that the restrictions $F^t(m)$ 
	\textit{fail} at $t$ in the basic game on $I,z$ against $BS_A,BS_B$.
	
	We'll say that, at $t$, an $I$-granular restriction $r$ is \textit{viable}
	if there is a restriction $s\in\{0,1\}^I$ such that
	$s\hat{}r$ is $(I,(N,\phi))$-good, and otherwise 
	we say that $r$ is \textit{choiceless}. 
	Let $G^t(m)$ be the restrictions in $F^t(m)$ that are viable, and $H^t(m)$ the restrictions in $F^t(m)$ that are choiceless.
	
	For the failed restrictions that are still viable,
	we choose another extension on positions in $I$ that is $(I,(N,\phi))$-good.
	Let $i=\sum_{T(m)<t'<t}|G^{t'}|$.
	For $n\leq|G^t|$
	we define $C(m,(i+n))$ to be the restriction $s\hat{}r_n$ where $r_n$ is the $n$-th element of
	$G^t(m)$ and $s$ is the lexicographically least restriction in $\{0,1\}^I$
	such that $s\hat{}r_n$ is $(I,(N,\phi))$-good at $t$ w.r.t. $\BS_A,\BS_B$. We define $T(m,(i+n))=t$.
	
	We'll call the mapping $C$ \textit{the choice function on $z,I$ against the pair of betting strategies $\BS_A,\BS_B$}.
	The set of restrictions $\{C(m): m\in\dom C\}$ is called the set of restrictions \textit{chosen} by the choice function $C$.
	The (open) set of sequences $\bigcup_{m\in\dom C}[C(m)]$ is called the set of sequences \textit{chosen} by the choice function $C$.
\end{definition}

We have that for every $n_1,\dots,n_k\in\bbN^k$ for which the choice function is defined,
the restrictions $C(\_)=s_0\hat{}r_0,\dots, C(n_1,\dots,n_k)=s_k\hat{}r_k$
are such that the tails extend each other, $r_0\prefixeqof\dots\prefixeqof r_k$, and the heads are an $(N,\phi)$-sequence w.r.t. the projections of betting strategies on any $\rho$ that extends $r_k$ and restricts all of the positions except the ones in $I$.
By proposition \ref{Ccount}, there is a bound on the number of
elements in an $(N,\phi)$-sequence, $Q$.
So the choice function is defined only on sequences of numbers of
length at most $Q$.

Note that for any $m\in\bbN^*$, $n,n'\in\bbN,\;n\neq n'$ such that both
$C(m,n)$ and $C(m,n')$ are defined, the tails
of $C(m,n)$, $C(m,n')$ are inconsistent restrictions that extend the tail
of $C(m)$.
Therefore the sets of sequences consistent with the tails of the chosen restrictions
with indices of length $k$ are mutually disjoint subsets of the set of sequences consistent with $z$.
That is, denoting the tail of the $m$-th chosen restriction with $C_r(m)$,
$\bigsqcup_{m\in\bbN^k\cap\dom C}[C_r(m)]\subseteq [z]$.
Since for every chosen restriction the set of sequences consistent with the restriction has size, conditional on the tail, $2^{-|I|}$,
we have that $\sum_{m\in\bbN^k\cap\dom C}\lambda([C(m)])\leq 2^{-|I|}\lambda([z])$.
But then, the sum of sizes of the sets of sequences consistent with the chosen restrictions is less than
$Q2^{-|I|}\lambda([z])$.

\begin{lemma}\label{choice_depth}
	The choice function on $z,I$ with parameters $N,\phi$ against the pair of betting strategies is undefined on all sequences of numbers longer than 
	the maximal number of elements in the $(N,\phi)$-good sequence
	against a pair of partition evaluations of a set with $2^{|I|}$ elements.
\end{lemma}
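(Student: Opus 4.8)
The plan is to bound the depth of the index tree of the choice function $C$ by reducing it to Proposition \ref{Ccount}: I will show that the heads of the restrictions chosen along any branch form an $(N,\phi)$-sequence with respect to a single, fixed pair of partition evaluations of the $2^{|I|}$-element set $\{0,1\}^I$, so that the bound on the length of such a sequence bounds the length of the branch.

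Suppose $C$ is defined on $m=(n_1,\dots,n_k)$, and write the restrictions chosen along the branch from the root to $m$ as $C(\_)=s_0\hat{}r_0,\dots,C(n_1,\dots,n_k)=s_k\hat{}r_k$, chosen at times $T_0=0<T_1<\dots<T_k$. By the construction each tail extends the previous one, so $z=r_0\prefixeqof\dots\prefixeqof r_k$; each $r_i$ lies in the common set of $I$-granular restrictions at $T_i$ and is therefore $I$-granular at $T_i$ w.r.t. $S_A,S_B$; each head $s_i$ is $(N,\phi)$-good at $T_i$ w.r.t. the projections of $\BS_A,\BS_B$ on $r_i$; and for $i\ge 1$ the restriction $s_{i-1}\hat{}r_i$ becomes $I$-stale after $T_{i-1}$ at $T_i$.

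Next I would fix any $\rho\in R_I$ extending $r_k$ and let $P_A,P_B$ be $\BS_A,\BS_B$ projected on $\rho$; these are partition evaluations of $\{0,1\}^I$. Since $\rho$ extends each $r_i$ and leaves the positions in $I$ unrestricted, Lemma \ref{proj_on_r<r'_eq} gives that, up to time $T_i$, the projection on $r_i$ coincides with the projection on $\rho$. Because $(N,\phi)$-goodness at a time, and staleness between two times, depend only on the projection up to the later of those times, I can transport the branch data to $\rho$: $s_{i-1}$ is $(N,\phi)$-good at $T_{i-1}$ w.r.t. $P_A,P_B$, and becomes stale after $T_{i-1}$ at $T_i$ w.r.t. $P_A,P_B$ (the coincidence of the projections up to $T_i$ also guarantees $T_i$ is the \emph{least} such time, so staleness occurs at the same moment). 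Using that ``good w.r.t. the pair'' means good for both and ``stale w.r.t. the pair'' means stale for one, which is exactly the reading in Definitions \ref{def_good_bad_stale_wrt_BS_pair} and \ref{def_good_bad_stale_wrt_PE_pair}, the sequence $s_0,\dots,s_k$ with witnessing times $T_0,\dots,T_k$ (the final element allowed to be forever-fresh) is an $(N,\phi)$-sequence w.r.t. $P_A,P_B$ of length $k+1$.

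By Proposition \ref{Ccount} this length cannot exceed the maximal number of elements of an $(N,\phi)$-sequence against a pair of partition evaluations of a $2^{|I|}$-element set; hence $k+1$, and a fortiori $|m|=k$, is bounded by that number, so $C$ is undefined whenever $|m|$ exceeds it. The step I expect to be the crux is the transfer in the previous paragraph: one must check that the construction's conditions ($(I,(N,\phi))$-good and $I$-stale), which are evaluated through projections on the \emph{changing} finite tails $r_i$, agree with the fixed-$\rho$ conditions used in Proposition \ref{Ccount}, and in particular that the staleness \emph{time} $T_i$ is preserved and not merely the staleness \emph{event} — this is precisely where the ``up to time $t$'' equality of projections in Lemma \ref{proj_on_r<r'_eq}, together with the monotonicity of the evaluation functions in $t$, is needed.
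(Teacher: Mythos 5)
Your proof is correct and follows essentially the same route as the paper's: both transfer the branch data to a single $\rho\in R_I$ extending the final tail via Lemma \ref{proj_on_r<r'_eq}, recognize the heads as an $(N,\phi)$-sequence w.r.t.\ the projections on $\rho$, and invoke Proposition \ref{Ccount}. Your write-up is merely more explicit about why goodness and the staleness \emph{times} are preserved under the change of tail, a step the paper compresses into a single citation of Lemma \ref{proj_on_r<r'_eq}.
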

\begin{proof}
	Let $m=n_1,\dots,n_z$ such that $C(m)$ is defined.
	Then $C(\_)=s_{0}\hat{}r_{0},C(n_1)=s_1\hat{}r_1,\dots,C(n_1,\dots,n_z)=s_z\hat{}r_z$ are restrictions such that
	$r_z$ is a restriction that extends all of the restrictions $r_0,\dots,r_{z-1}$, and
	$s_i\hat{}r_z$ is $(I,(N,\phi))$-good at $t_i=T(n_1,\dots,n_i)$,
	and for all $i<z$,
	$s_i\hat{}r_z$ becomes $I$-stale after $t_i$ at $t_{i+1}=T(n_1,\dots,n_i,n_{i+1})$,
	w.r.t. $\BS_A,\BS_B$.
	By lemma \ref{proj_on_r<r'_eq},
	for every $\rho\in R_I$ that extends $r_z$, we'll have that
	$s_0,\dots,s_z$ is an $(N,\phi)$-sequence w.r.t. to the pair of betting strategies projected on $\rho$.
	By proposition \ref{Ccount} it cannot have more than $Q$ elements, where 
	$Q=2^{|I|}\frac{6}{\phi}$ if the betting strategies are half-splitting,
	and $Q=2^{|I|}\frac{2(N+1)}{\phi}$ otherwise.
\end{proof}

\begin{lemma}\label{subrest_disjoint}
	Let $C$ denote the choice function on $z,I$  with parameters $N,\phi$ against a pair of betting strategies. 
	Let $m\in\bbN^*$ be such that $C(m)$ is defined and
	$C(m)=s\hat{}r$, where $s$ restricts positions in $I$.
	
	Let $F$ be the union over $t$ of the granular restrictions containing sequences from the $m$-th chosen restriction that fail at $t$ in the basic game on $I,z$ against the pair of betting strategies.
	
	The restrictions in $F$ are mutually inconsistent extensions of $r$.
\end{lemma}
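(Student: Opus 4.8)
The plan is to verify the two assertions separately: that every restriction in $F$ extends $r$, and that any two of them are inconsistent. The first is immediate. By the definition of the choice function, each restriction contributing to $F$ lies in some $F^t(m)$, and every element of $F^t(m)$ is by construction a common $I$-granular restriction at $t$ that extends the tail $r$ of $C(m)$. So I would dispose of the ``extensions of $r$'' clause in a single line.

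For mutual inconsistency I would first record a structural fact: the sets of positions restricted by the elements of $F$ are nested. Each element of $F^t(m)$ restricts exactly the positions $K(A,B,t)\setminus I$, where $K(A,B,t)$ is the set of positions inspected up to time $t$; and $K(A,B,t)$ is non-decreasing in $t$. Indeed, a terminal part at a later time is contained in a terminal part at an earlier time, so any restriction elementary at $t'$ is elementary at every $t\le t'$, and minimality of the inspected set (Lemma~\ref{minsetpos}) then gives $K(A,B,t)\subseteq K(A,B,t')$. Consequently, given two distinct restrictions $r_1\in F^{t_1}(m)$ and $r_2\in F^{t_2}(m)$ with $t_1\le t_2$, their domains satisfy $K(A,B,t_1)\setminus I\subseteq K(A,B,t_2)\setminus I$, so they are either inconsistent or $r_2$ extends $r_1$. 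When $t_1=t_2$ the two restrictions share the same domain and, being distinct, are inconsistent; a single restriction also cannot fail at two distinct times, since staleness picks the unique first time after $T(m)$, so no element is repeated across different $F^t(m)$.

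The crux is to rule out the remaining possibility, that $r_2$ properly extends $r_1$ with $t_1<t_2$. Here I would invoke Lemma~\ref{proj_on_r<r'_eq}: since $r_1$ is $I$-granular at $t_1$ and $r_2$ extends $r_1$ while leaving $I$ unrestricted, $r_2$ is also $I$-granular at $t_1$, and the projections of $\BS_A,\BS_B$ up to time $t_1$ on $r_1$ and on $r_2$ coincide. By the definition of $F$, the head $s$ of $C(m)$ becomes $I$-stale after $T(m)$ at $t_1$ relative to the projections on $r_1$; that is, $t_1$ is the first time after $T(m)$ at which a part containing $s$ splits or an evaluation exceeds $2$ in one of those projections. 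Because the projections on $r_2$ agree with those on $r_1$ through time $t_1$, the very same splitting or evaluation event occurs for the projections on $r_2$ at $t_1$. Hence $s\hat{}r_2$ would already have become $I$-stale after $T(m)$ at some time $\le t_1<t_2$, contradicting $r_2\in F^{t_2}(m)$, which requires $t_2$ to be its first staleness time. Therefore $r_2$ cannot extend $r_1$, and by the nesting of domains $r_1$ and $r_2$ must be inconsistent.

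I expect the main obstacle to be the bookkeeping around the phrase ``up to time $t$'' in the projection and staleness definitions: one must check that truncating the strategies at $t_1$ rather than at $t_2$ does not move the first staleness event, i.e. that the event witnessed for $r_1$ at $t_1$ is genuinely the same event for $r_2$ under either truncation. This is precisely what Lemma~\ref{proj_on_r<r'_eq} delivers once $I$-granularity of $r_2$ at $t_1$ is established, so the heart of the argument is applying that lemma at the right time; the only auxiliary fact I would need to spell out is the monotonicity of the inspected positions used to obtain nested domains.
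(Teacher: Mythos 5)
Your proof is correct and follows essentially the same route as the paper: monotonicity of the inspected positions gives nested domains, restrictions failing at the same time are inconsistent because they share a domain, and a consistent pair at distinct times is ruled out because the earlier staleness event transfers to the extending restriction, contradicting first-time staleness. The only cosmetic difference is that you explicitly invoke Lemma~\ref{proj_on_r<r'_eq} for that transfer step, which the paper asserts without citation.
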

\begin{proof}
	Let $J^t$ be the common set of $I$-granular restrictions at $t$ w.r.t. the pair of partition refinements of the betting strategies.
	We have that 
	$F=\bigcup_{t\geq T(m)}F^t(m)$.
	By definition, $F^t(m)$ is a subset of $J^t$, and the restrictions in $J^t$ are mutually inconsistent, since they restrict the same set of positions.
	
	All that is left to prove is that for any $t<t'$ and $r,r'\in F^t(m),F^{t'}(m)$ the restrictions $r,r'$ are inconsistent.
	By definition, $r\in F^t(m)$ implies that $s\hat{}r$ becomes $I$-stale after $T(m)$ at $t$ (w.r.t. the pair of betting strategies) .
	Suppose $r,r'$ are consistent.
	Since the set of inspected positions grows with $t$,
	$r$ restricts a subset of positions restricted by $r'$
	and therefore, if $r,r'$ are consistent, $r'$ is an extension of $r$.
	If $s\hat r$ becomes $I$-stale after $T(m)$ at $t$ then
	also $s\hat r'$ becomes $I$-stale after $T(m)$ at $t$.
	On the other hand, $r'\in F^{t'}(m)$ implies that
	$s\hat{}r'$ becomes $I$-stale after $T(m)$ at $t'>t$, a contradiction.
	Therefore $r,r'$ are inconsistent.
\end{proof}

\begin{proposition}\label{chosen_size}
	For any pair of betting strategies $\BS_A,\BS_B$, 
	a set of positions $I$
	and a restriction $z$ that restricts a finite set of positions disjoint from $I$,
	the sum of sizes of sets of sequences consistent with restrictions chosen by the choice function on $z,I$  with parameters $N,\phi$ against $\BS_A,\BS_B$ is less than
	$Q2^{-|I|}\lambda([z])$, where $Q$ is the bound on the size of an
	$(N,\phi)$-sequence against a pair of partition evaluations of a set with $2^{|I|}$ elements.
	
\end{proposition}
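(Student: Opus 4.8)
The plan is to group the chosen restrictions by the length of their index and to bound the total measure one length at a time. For each $k$ write $\dom C\cap\bbN^k$ for the set of indices of length $k$. By Lemma~\ref{choice_depth} the choice function is undefined on indices longer than the size of the longest $(N,\phi)$-sequence, so only finitely many lengths $k$ occur. I would first show that for each fixed length the chosen restrictions of that length have total size at most $2^{-|I|}\lambda([z])$, and then multiply this by the number of admissible lengths, which Proposition~\ref{Ccount} bounds by $Q$.

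The heart of the argument is disjointness of tails at a fixed length. Write $C_r(m)$ for the tail of $C(m)$. From the construction the tails grow along any chain, $C_r(\_)\prefixeqof C_r(n_1)\prefixeqof\dots\prefixeqof C_r(n_1,\dots,n_k)$, and by Lemma~\ref{subrest_disjoint} the tails of two siblings $C(m,n),C(m,n')$ with $n\neq n'$ are mutually inconsistent. Given two distinct indices of the same length $k$, I would let $j$ be the first coordinate where they differ and $p$ their common prefix; the tails of $C(p,n_j)$ and $C(p,n_j')$ are then inconsistent, and since the tails of the two full indices extend these respective tails, they are inconsistent as well. Hence the sets $[C_r(m)]$ for $m\in\dom C\cap\bbN^k$ are pairwise disjoint subsets of $[z]$, so their sizes sum to at most $\lambda([z])$. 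Each $C(m)=s\hat{}r$ concatenates a head $s$ restricting all $|I|$ positions with a tail $r=C_r(m)$ on disjoint positions, whence $\lambda([C(m)])=2^{-|I|}\lambda([C_r(m)])$; summing over the fixed length gives $\sum_{m\in\dom C\cap\bbN^k}\lambda([C(m)])\leq 2^{-|I|}\lambda([z])$.

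To finish I sum this over lengths. A chain of index length $k$ corresponds, for any $\rho\in R_I$ extending its final tail, to an $(N,\phi)$-sequence $s_0,\dots,s_k$ of $k+1$ heads w.r.t.\ the projected strategies (using Lemma~\ref{proj_on_r<r'_eq} so that the projection does not depend on the chosen $\rho$), so the number of admissible lengths equals the size of the longest realized $(N,\phi)$-sequence, which is at most $Q$. Therefore $\sum_{m\in\dom C}\lambda([C(m)])\leq Q\,2^{-|I|}\lambda([z])$. For the strict inequality I would use that $\lambda([z])=2^{-|\dom z|}>0$ and that the number of admissible lengths is exactly the longest realized $(N,\phi)$-sequence length, which Proposition~\ref{Ccount} places strictly below $Q$ in the general case.

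The step I expect to be the main obstacle is the cross-branch disjointness of same-length tails: Lemma~\ref{subrest_disjoint} only delivers disjointness for siblings, so the proof must route two arbitrary same-length indices through their first point of divergence and combine sibling inconsistency with the monotone growth of tails along a chain. A secondary delicate point is upgrading the ``at most $Q$'' count of admissible lengths to the strict inequality: in the general case this is immediate from the strict bound of Proposition~\ref{Ccount}, but in the half-splitting case, where the count bound is not strict, one must argue separately that the levels cannot all be measure-full, which I would approach by tracking the tails that remain forever-fresh and hence never feed a deeper level (noting in particular that the length-$0$ level already contributes exactly $2^{-|I|}\lambda([z])$).
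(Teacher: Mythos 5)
Your proposal is correct and follows essentially the same route as the paper: decompose the chosen restrictions by index length, use Lemma~\ref{subrest_disjoint} to get disjointness of the tails so that each level contributes at most $2^{-|I|}\lambda([z])$, and bound the number of levels by $Q$ via Lemma~\ref{choice_depth} and Proposition~\ref{Ccount}. The only differences are cosmetic --- the paper obtains the per-level bound by summing the sibling inequality over parents (a one-step induction on $k$) rather than proving pairwise inconsistency across branches via the first point of divergence, and it does not belabor the strict-versus-non-strict issue you flag at the end (the paper's own final display in fact only asserts $\leq$).
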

\begin{proof}
	Let $C$ denote the choice function on $z,I$  with parameters $N,\phi$ against $\BS_A,\BS_B$. Let $C_r(m)$ be the restriction that leaves the positions in $I$ unrestricted and is the same
	as restriction $C(m)$ on positions outside $I$.
	
	By definition
	$C_r(\_)=z$, 
	and by lemma \ref{subrest_disjoint},
	for any $m\in\bbN$,
	\\
	$\sum_{n\in\bbN\land m,n\in\dom C}
	\lambda([C_r(m,n)])
	\leq 
	\lambda([C_r(m)])$.
	Therefore, for any $k$,
	\\
	$\sum_{m\in\bbN^k\cap\dom C}\lambda([C_r(m)])\leq\lambda(z)$ implying
	$\sum_{m\in\bbN^k\cap\dom C}\lambda([C(m)])\leq 2^{-|I|}\lambda(z)$. 
	By lemma \ref{choice_depth}
	$C$ is defined only on sequences of natural numbers shorter than $Q$.
	We have:
	\\
	$
	\sum_{m\in\dom C}\lambda([C(m)])
	=
	\sum_{k\in\bbN}\sum_{m\in\bbN^k\cap\dom C}\lambda([C(m)])
	=
	\sum_{k\leq Q}\sum_{m\in\bbN^k\cap\dom C}\lambda([C(m)])
	\leq 
	Q2^{-|I|}\lambda(z)$
\end{proof}

For every $\rho$ that extends $z$ and restricts all of the positions except the ones in $I$,
since $(N,\phi)$-sequences are finite, there is the last chosen restriction
$s\hat{}r$,
whose tail $r$ is consistent with $\rho$.
The last restriction is such that either the head $s$ is forever-fresh after the time it was chosen
(w.r.t. $\BS_A,\BS_B$ projected on $\rho$), or, at some $t$,
for some tail $r'$, $r\prefixeqof r'\prefixof\rho$,
the restriction $s\hat{}r'$ becomes $I$-stale w.r.t. $\BS_A,\BS_B$
and there are no more $(I,(N,\phi))$-good restrictions with the tail $r'$ that can be chosen (choiceless $r'$).
We can divide the restrictions in $R_I$ that extend $z$ into two subsets, $H$,
the set of restrictions extending a choiceless restriction,  and $M$, 
the set of restrictions that extend the tail of some chosen restriction
whose head remains forever-fresh.

Note that the set of sequences consistent with $H$ is open, and with $M$ closed.

The set of sequences consistent with some $\rho\in H$ consists of:
\begin{itemize}
	\item 
	the set of sequences on
	which either betting strategy achieves high (larger than $2$) capital, 
	denoted with $W$
	\item 
	the set of sequences contained in a part of one of the partition refinements of $\BS_A,\BS_B$ that was split 
	more than $N$ times on $\rho$, 
	denoted with $L$
	\item 
	the set of sequences contained in a part $v$ of one of the partition refinements of $\BS_A,\BS_B$, such that $|v\cap[\rho]|<\phi$, denoted with $U$
\end{itemize}

The size of the set of sequences consistent with restrictions in $M$
is then larger than
$\lambda([z])-\lambda(W)-\lambda(L)-\lambda(U\setminus (L\cup W))$.
Assume that there is some $\theta,\epsilon$ such that
$\lambda(W|[z])\leq 1-\theta$ and $\lambda(L)\leq\epsilon\lambda([z])$.
Suppose $N\leq|I|-\log\phi-h$, then by lemma \ref{size_of_slim,subN-splits} $\lambda(U\setminus (L\cup W)|[z])\leq 2^{-h}$.
Then the size of the set of sequences consistent with restrictions in $M$
is larger than $\lambda([z])(\theta-\epsilon-2^{-h})$.

Let $M(m)$ denote the restrictions in $M$ consistent with the tail of the $m$-th chosen restriction and $\tilde{M}(m)$ the set of sequences consistent with restrictions in $M(m)$. For at least one $k$ smaller than
the maximum length of an $(N,\phi)$-sequence, $Q$,
we have
$\lambda(\bigsqcup_{m\in\bbN^k\cap\dom C}\tilde{M}(m))
\geq
\lambda([z])\frac{\theta-\epsilon-2^{-h}}{Q}
$.

But then for at least one $m\in\bbN^k$ we'll have that the size
of $\tilde{M}(m)$ conditional on the tail of the $m$-th chosen restriction
is $\theta'\geq \frac{\theta-\epsilon-2^{-h}}{Q}$.

This implies that for this $m$, the set of sequences consistent with 
$C(m)$ and some $\rho\in M(m)$ has size, conditional on $[C(m)]$, at least $\theta'$.
The betting strategies don't achieve high capital on any sequence in this set, and if $\theta'>0$ we have that
there is a large subset of sequences that have low capital consistent with the $m$-th chosen restriction. 

Recall the definition \ref{def_I-split_on}.
\begin{proposition}\label{chosen_low_cap}
	Let $\BS_A,\BS_B$ be a pair of betting strategies, 
	$I$ a finite set of positions,
	and $z$ a restriction that restricts a finite set of positions disjoint from $I$.
	Let $N,\phi,h$ be such that $N\leq |I|-\log\phi-h$.
	Let $\theta$ be the size, conditional on $[z]$, of the set of sequences on which the betting strategies achieve capital less than 2.
	Let $\epsilon$ be the size, conditional on $[z]$, of the set of
	sequences on which either betting strategy $I$-splits more then $N$ many times. 
	There is a restriction $c$ chosen by the choice function $C$ on $z,I$ with parameters $N,\phi$ against $\BS_A,\BS_B$ such that the size, conditional on $[c]$, of the set of sequences on which the betting strategies achieve capital less than 2 is at least
	$\frac{\theta-\epsilon-2^{-h}}{Q}$ where $Q=2\frac{2^{|I|}}{\phi}(N+1)$. 
\end{proposition}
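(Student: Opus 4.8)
The plan is to follow the case analysis developed in the paragraphs preceding the statement, organizing it into one measure bound followed by two rounds of averaging. First I would fix, for each $\rho\in R_I$ extending $z$, the chain of chosen restrictions whose tails are consistent with $\rho$. By proposition~\ref{Ccount} the heads along this chain form a finite $(N,\phi)$-sequence (w.r.t.\ the pair of strategies projected on $\rho$), so there is a \emph{last} chosen restriction $s\hat{}r$ with $r\prefixeqof\rho$. Unwinding the construction, exactly one of two things holds: the head $s$ is forever-fresh after the time it was chosen, or $s\hat{}r'$ becomes $I$-stale at some $r'$ with $r\prefixeqof r'\prefixof\rho$ while no $(I,(N,\phi))$-good head remains for $r'$ (that is, $r'$ is choiceless). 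This dichotomy partitions the restrictions in $R_I$ extending $z$ into the closed set $M$ (forever-fresh tail) and the open set $H$ (extending a choiceless restriction).

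Next I would lower-bound the size of the sequences consistent with $M$. A sequence consistent with some $\rho\in H$ is, at the failure time, enclosed in an $(N,\phi)$-bad part of one of the two projections, so it lies in at least one of the three sets from the sketch: $W$ (some strategy already achieves capital $>2$), $L$ ($I$-split more than $N$ times on $\rho$, in the sense of definition~\ref{def_I-split_on}), or $U$ (enclosing part meets $[\rho]$ in fewer than $\phi$ elements). Hence the $M$-consistent sequences have size at least $\lambda([z])-\lambda(W)-\lambda(L)-\lambda(U\setminus(L\cup W))$. I would then bound the three corrections: $\lambda(W\mid[z])\leq 1-\theta$ and $\lambda(L\mid[z])\leq\epsilon$ are exactly the hypotheses on $\theta,\epsilon$, while the under-split slim part is controlled by lemma~\ref{size_of_slim,subN-splits}: since $N\leq|I|-\log\phi-h$, applying it to each of the two projections gives $\lambda\big(U\setminus(L\cup W)\mid[z]\big)\leq 2^{-h}$. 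Combining, the $M$-consistent sequences have conditional size at least $\theta-\epsilon-2^{-h}$.

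The final step is two pigeonholes. Grouping the restrictions in $M$ by the length $k$ of the index of the chosen restriction whose tail they extend, lemma~\ref{choice_depth} caps $k$ below $Q$, so some depth $k$ captures at least a $1/Q$ fraction of the $M$-mass; writing $\tilde{M}(m)$ for the sequences consistent with the $M$-restrictions below the $m$-th chosen restriction, this reads $\lambda\big(\bigsqcup_{m\in\bbN^k\cap\dom C}\tilde{M}(m)\big)\geq\lambda([z])\,(\theta-\epsilon-2^{-h})/Q$. The tails $C_r(m)$ for indices of fixed length $k$ are mutually disjoint by lemma~\ref{subrest_disjoint}, and each $C(m)$ has conditional size $2^{-|I|}$ inside its tail, so a second averaging yields a single index $m$ with $\lambda(\tilde{M}(m)\mid[C_r(m)])\geq(\theta-\epsilon-2^{-h})/Q$. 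Setting $c=C(m)$ and recalling that a forever-fresh head certifies, via definition~\ref{def_BS_restricted_to_rho} where the evaluation equals the maximum capital of the corresponding part, that both strategies achieve capital below $2$ on every sequence in $\tilde{M}(m)$, this $c$ witnesses the claimed bound.

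I expect the main obstacle to be the bookkeeping that glues these pieces together cleanly: proving that the $H$-consistent sequences are genuinely \emph{covered} by $W\cup L\cup U$ (so that the hypotheses and lemma~\ref{size_of_slim,subN-splits} can be subtracted with only the stated $U\setminus(L\cup W)$ correction and no further double counting), and, on the positive side, verifying that ``forever-fresh'' really forces capital below $2$ on all of $\tilde{M}(m)$ rather than merely at the choice time. Both require reading off the projected evaluation as the maximum capital and carefully tracking how conditional sizes transfer between $[z]$, the tails $C_r(m)$, and the full restrictions $[C(m)]$.
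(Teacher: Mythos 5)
Your proposal is correct and follows essentially the same route as the paper's own proof: the decomposition of $[z]$ into the forever-fresh part $M$ and the choiceless part $\tilde{H}$, the bound $\lambda(\tilde{H}\mid[z])\leq(1-\theta)+\epsilon+2^{-h}$ via lemma~\ref{size_of_slim,subN-splits}, and the two averaging steps over depth (lemma~\ref{choice_depth}) and over indices at a fixed depth (lemma~\ref{subrest_disjoint}) are exactly the paper's argument. The two ``obstacles'' you flag at the end are precisely the points the paper handles with its set $X^t$ and with the identity $\lambda(\bigsqcup_{\rho\in M(m)}[C_s(m)\hat{}\rho]\mid[C(m)])=\lambda(M(m)\mid[C_r(m)])$, so nothing essential is missing.
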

\begin{proof}
	Let $C_s(m)$ denote the restriction that restricts positions in $I$, and $C_r(m)$ the restriction that restricts positions outside $I$
	such that $C(m)=C_s(m)\hat{}C_r(m)$.
	
	Let $F(m)$ denote the union over $t$ of 
	the granular restrictions containing sequences from the $m$-th chosen restriction that fail at $t$ in the basic game on $I,z$ against $\BS_A,\BS_B$.
	That is,
	$F(m)=\bigcup_{t}F^t(m)$.
	Let $\tilde{F}(m)$ be the set of sequences consistent with restrictions in $F(m)$. That is,
	$\tilde{F}(m)=\bigsqcup_{r\in F(m)}[r]$.
	
	Let $H(m)$ denote the choiceless restrictions in $F(m)$, and
	$\tilde{H}(m)$ the sequences consistent with them.
	That is,
	$H(m)=\bigcup_t H^t(m)$,
	$\tilde{H}(m)=\bigsqcup_{r\in H(m)}[r]$.
	
	Let $G(m)$ denote the viable restrictions in $F(m)$, and
	$\tilde{G}(m)$ the sequences consistent with them.
	That is,
	$G(m)=\bigcup_t G^t(m)$,
	$\tilde{G}(m)=\bigsqcup_{r\in G(m)}[r]$.

	Let
	$M(m)=[C_r(m)]\setminus\tilde{F}(m)$.
	That is,
	$M(m)$ is the set of sequences consistent with restrictions in $R_I$ that extend the tail of the $m$-th chosen restriction, $C_r(m)$, such that 
	for every $\rho\in M(m)$ the head of the $m$-th chosen restriction, $C_s(m)$,
	is forever-fresh after $T(m)$ w.r.t. $\BS_A,\BS_B$ projected on $\rho$.
	
	By lemma \ref{subrest_disjoint}, the restrictions in $F(m)$ are mutually inconsistent, and extend $C_r(m)$. 
	By definition, $F^t(m)=G^t(m)\sqcup H^t(m)$ and therefore
	$F(m)=G(m)\sqcup H(m)$. 
	We have that for all $m$ for which $C$ is defined,
	$[C_r(m)]=M(m)\sqcup\tilde{G}(m)\sqcup\tilde{H}(m)$.
	
	We have that $\bigcup_{n\in\bbN, (m,n)\in\dom C}C_r(m,n)=G(m)$, and we can write
	$[C_r(m)]=M(m)\sqcup\tilde{H}(m)\bigsqcup_{n\in\bbN,(m,n)\in\dom C}[C_r(m,n)]$.
	But then,
	$$[C_r(\_)]=
	(\bigsqcup_{0\leq k<Q}\bigsqcup_{m\in\bbN^k\cap\dom C}M(m)\sqcup\tilde{H}(m))
	\sqcup
	(\bigsqcup_{m\in\bbN^Q\cap\dom C}[C_r(m)])$$.
	By lemma \ref{choice_depth}, for large enough $Q$ and any $m\in\bbN^Q$, $C(m)$ is not defined. By definition, $C_r(\_)=z$, and we have
	\begin{equation}\label{CL2}
	[z]=\bigsqcup_{m\in\dom C}M(m)\sqcup\tilde{H}(m)
	\end{equation}

	Let $X^t$ be the subset of sequences
	consistent with a restriction that is $(I,(N,\phi))$-bad at $t$ w.r.t. $\BS_A,\BS_B$, 
	such that a sequence is in $X^t$
	only if it was $I$-split on less than $N$ many times by $\BS_A,\BS_B$ up to time $t$,
	and only if $\BS_A,\BS_B$, up to time $t$, achieve capital of less than $2$ on the sequence.
	
	Note that this implies that, for some $s\in\{0,1\}^I,\rho\in R_I$,
	the sequence consistent with the restriction $s\hat{}\rho$
	is in $X^t$, only if it
	is contained in a part of a partition refinement of one of the
	betting strategies up to time $t$, whose corresponding part in
	the projection on $\rho$ has less than $\phi$ many elements.
	
	Let $r\in H^t(m)$ and let $\rho$ be a restriction in $R_I$ that extends $r$. 
	By lemma \ref{proj_on_r<r'_eq}
	all of the sequences in $[\rho]$ are consistent
	with some $(I,(N,\phi))$-bad restriction w.r.t. $\BS_A,\BS_B$ up to time $t$.
	
	By lemma \ref{size_of_slim,subN-splits},
	there is at most $2^{|I|-h}$ sequences in $[\rho]$
	that are contained in intersections of parts $u,v$
	that split less than $N$ many times 
	w.r.t. $\BS_A,\BS_B$ up to time $t$ projected on $\rho$,
	and have evaluations $\nu_A(u),\nu_B(v)$ less than $2$.
	
	Summing (integrating) over $\rho\in R_I$ that extend $r$,
	we have that the size of $X^t$, conditional on $[r]$, is at most $2^{-h}$.
	This implies that the size, conditional on $[r]$,
	of the set of sequences contained in parts of
	the partitions refinements of the betting strategies that were split on $r$ more than $N$ many times or have maximal capital larger than $2$ is more than $1-2^{-h}$.
	
	For distinct $m,m'\in\bbN^*$, if $C(m),C(m')$ are defined,
	the sets $\tilde{H}(m),\tilde{H}(m')$ are disjoint subsets of $[z]$.
	The size, conditional on $[z]$, of the set of sequences on which the betting strategies achieve capital larger than $2$ is at most $1-\theta$, and we have
	$\sum_{m\in\dom C} \lambda(\tilde{H}(m))(1-2^{-h})\leq (1-\theta)\lambda([z])+\epsilon\lambda[z]$.
	This implies
	$$\sum_{m\in\dom C} \lambda(\tilde{H}(m))\leq ((1-\theta)+\epsilon+2^{-h})\lambda[z]$$

	Then from \eqref{CL2},
	$\sum_{m\in\dom C}\lambda(M(m))\geq (\theta-\epsilon-2^{-h})\lambda([z])$. 
	Let $Q$ be the maximal length of an $(N,\phi)$-sequence against a pair of partition evaluations of a set with $2^{|I|}$ elements.
	By lemma \ref{choice_depth}, for any $k\geq Q$,
	$C$ is not defined on any $m\in \bbN^k$.
	But then for at least one $k<Q$,
	$\sum_{m\in\bbN^k\cap\dom C}\lambda(M(m))
	\geq(\frac{\theta-\epsilon-2^{-h}}{Q})\lambda(z)$.
	Since
	$M(m)\subseteq [C_r(m)]$ and
	$\sum_{m\in\bbN^k\cap\dom C}\lambda([C_r(m)])\leq\lambda([z])$,
	for at least one $k,m\in\bbN^k$ we have
	$\lambda(M(m)|[C_r(m)])\geq \frac{\theta-\epsilon-2^{-h}}{Q}$.
	On the other hand,
	$\lambda(\bigsqcup_{\rho\in M(m)}[C_s(m)\hat{}\rho]|[C(m)])
	=
	\lambda(M(m)|[C_r(m)])
	$, 
	and since the $C_s(m)$ is forever-fresh after $T(m)$
	w.r.t. $\BS_A,\BS_B$ projected on $\rho$,
	the betting strategies achieve capital less than $2$ on 
	the sequence consistent with $C_s(m)\hat{}\rho$,
	and the result follows.
\end{proof}

\section{Constructing the Martin-L\"of-test}\label{sec_ML_test}
We'll now construct a ML-test for a given pair of computable betting strategies $\BS_A,\BS_B$.

We play the basic games on a sequence of disjoint sets of positions
$I_1,I_2,\dots$ paired up with parameters $(N_1,\phi_1),(N_2,\phi_2),\dots$, called game zones.
We'll call the pair $(I_i,(N_i,\phi_i))$ the $i$-th zone.

The basic game on $I_i,z$ against $\BS_A,\BS_B$
with parameters $N_i,\phi_i$ we call the basic game for $z$ on the $i$-th zone.

The $n$-th level of the ML-test will be the set of sequences consistent with
restrictions in the $n$-th level of the chosen restrictions (to be defined).
The $0$-th level of the chosen restrictions has only the empty restriction.
For $n\in\bbN$, the $n$-th level consists of restrictions chosen 
in the basic games for restriction $c$ on the $i$-th zone,
for all $c$ in the $(n-1)$-th level of the chosen restrictions,
and all $i$ such that 
$c$ does not restrict any positions in $I_i$.

Let $Q_i$ be
the upper bound on the length of an $(N_i,\phi_i)$-sequence for
basic game on the $I_i$-th zone 
(propositions \ref{Ccount},\ref{choice_depth}).
Suppose the zones are picked in such a way that
\begin{equation}\label{cond_nullset}
Q_i\leq 2^{|I_i|-i-1}
\end{equation}
Then by proposition \ref{chosen_size},
the sum of sizes of the sets of sequences chosen in the basic game
for $c$ on the $i$-th zone is at most $2^{-i-1}\lambda([c])$,
and summing the sizes over all zones, $1/2\lambda([c])$.
This implies that the size of the set of sequences
consistent with the $n$-th level of chosen restrictions is less than $2^{-n}$,
as it should be for the $n$-th level of a ML-test.

We'll say that the bound on the number of splits in the $i$-th zone was violated for a sequence $\sigma$ if 
the restriction,
 that restricts all of the positions to the same bits as $\sigma$,
 was $I_i$-split on more than $N_i$ times.

Suppose the zones are picked in such a way that
\begin{equation}\label{cond_N}
N_i\leq |I_i|-\log\phi_i-i
\end{equation}
Let's call a sequence on which the betting strategies achieve only low
($\leq2$) capital a sequence with low capital.
By proposition \ref{chosen_low_cap},
for large enough $i$,
if a restriction $z$ has a large ($>0$) subset
of sequences with low capital,
and the size, of the set of sequences for which the bound on the number of splits in the $i$-th zone was violated, is small enough,
then one of the chosen restrictions 
in the basic game for $z$ on the $i$-th zone, $z'$,
has a large subset of sequences with low capital.

Let $\epsilon_i$ denote the size of the set of sequences for which the bound on the number of splits in the $i$-th zone was violated.
If
\begin{equation}\label{cond_viol_size}
\lim_i \epsilon_i = 0
\end{equation}
then for every restriction $z$, if $[z]$  has a large subset
of sequences with low capital, there is some $i$ and a restriction $z'$, chosen in the basic game for $z$ on the $i$-th zone,
such that
$[z']$  has a large subset
of sequences with low capital.

Note that the set of sequences with low capital is large and consistent with the empty restriction.
By induction, there is some sequence of restrictions 
$z_1\prefixof z_2\prefixof\dots$ such that $z_n$ is in the
$n$-th level of chosen restrictions, and $[z_n]$
has a (large) subset of sequences with low capital.
By compactness, the set of sequences consistent with all of the restrictions $z_1,z_2,\dots$ contains a sequence with low capital.

We have shown that if the zones satisfy conditions
\ref{cond_nullset} and \ref{cond_N}, and $\BS_A,\BS_B$ satisfy
condition \ref{cond_viol_size} then
the set of sequences consistent with
the $n$-th level of chosen restrictions is an $n$-th level of a ML-test,
and there is a sequence which fails every level of this ML-test
on which neither betting strategy wins.   

\begin{definition}
	Let $I$ be a finite set of positions, and $(N, \phi)$ a pair of natural numbers.
	We will call the pair $(I,(N,\phi))$ a \textit{zone}.
	A sequence of zones $(I_1,N_1,\phi_1),(I_2,N_2,\phi_2),\dots$
	with disjoint sets of positions is called \textit{game zones}.
\end{definition}
\begin{definition}
Let $Z=(I_1,N_1,\phi_1),(I_2,N_2,\phi_2),\dots$ be some game zones.
If a betting strategy $I_i$-splits on a sequence more than $N_i$ times,
we'll say that \textit{the bound on the number of splits in the $i$-th zone was violated by the betting strategy} for this sequence.

Let $\BS_A,\BS_B$ be a pair of betting strategies.
When $\BS_A,\BS_B$ are known, we'll say that a sequence has low capital if both $\BS_A,\BS_B$ achieve capital on the sequence that is less than or equal to the treshold $2$.
	
	Let $z$ be a restriction that restricts a finite set of positions $P$.
	If $z$ does not restrict positions in $I_i$,
	let $C_i$ be the choice function on $I_i,z$ against $\BS_A,\BS_B$
	with parameters $N_i,\phi_i$.
	If $z$ does restrict positions in $I_i$,
	let $C_i$ be undefined on all inputs $m\in\bbN^*$, 
	that is, the set of restrictions chosen by $C_i$ is empty.
	
	We'll call the union over $i$ of restrictions chosen by $C_i$
	\textit{the restrictions chosen against $\BS_A,\BS_B$ on zones $Z$ for restriction $z$}.	
	We'll call the set of sequences consistent with those restrictions \textit{the sequences chosen against $\BS_A,\BS_B$ on zones $Z$ for restriction $z$}.	
\end{definition}
\begin{lemma}\label{lemma_cond_Q_size}
	Let $\BS_A,\BS_B$ be a pair of betting strategies and
	$Z=(I_1,(N_1,\phi_1)),(I_2,(N_2,\phi_2)),\dots$ be some game zones.
	Let $Q_i$ be the upper bound on the length of an 
	$(N_i,\phi_i)$-sequence against a pair of partition evaluations of a set with $2^{|I_i|}$ elements.
	Let $z$ be a restriction that restricts a finite set of positions.
	
	If $Q_i\leq 2^{|I_i|-i-1}$
	then the sum of sizes of the sets of sequences consistent with restrictions chosen against $\BS_A,\BS_B$ on zones $Z$ for restriction $z$ is less than $\frac{1}{2}\lambda([z])$.
\end{lemma}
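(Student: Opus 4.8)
The plan is to reduce the claim to Proposition \ref{chosen_size} followed by a single geometric summation. By definition, the restrictions chosen against $\BS_A,\BS_B$ on zones $Z$ for $z$ form the union over $i$ of the restrictions chosen by the individual choice functions $C_i$, so the quantity I must bound is at most $\sum_i \sigma_i$, where $\sigma_i$ denotes the sum of sizes of the sets of sequences consistent with the restrictions chosen by $C_i$. It therefore suffices to bound each $\sigma_i$ and then sum.

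First I would dispose of the degenerate zones. For every $i$ such that $z$ restricts some position in $I_i$, the choice function $C_i$ is by definition empty, so $\sigma_i = 0$; since $z$ restricts only finitely many positions and the $I_i$ are pairwise disjoint, this occurs for only finitely many $i$, but this fact is not needed for the estimate. For every remaining $i$ the set $I_i$ is disjoint from the positions restricted by $z$, so Proposition \ref{chosen_size} applies verbatim and gives the strict bound $\sigma_i < Q_i 2^{-|I_i|}\lambda([z])$, with $Q_i$ exactly the quantity named in the hypothesis.

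Next I would substitute the hypothesis $Q_i \leq 2^{|I_i|-i-1}$, turning the per-zone bound into $\sigma_i < 2^{|I_i|-i-1}\,2^{-|I_i|}\lambda([z]) = 2^{-i-1}\lambda([z])$, a bound that also holds trivially for the degenerate zones (where $\sigma_i = 0$). Summing over all $i \geq 1$ and using $\sum_{i \geq 1} 2^{-i-1} = \tfrac{1}{2}$ then gives $\sum_i \sigma_i < \tfrac{1}{2}\lambda([z])$, which is the assertion.

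There is no genuine obstacle here; the only point requiring care is the strictness of the final inequality. The geometric series sums to exactly $\tfrac{1}{2}$, so a merely non-strict per-zone bound would yield only $\leq \tfrac{1}{2}\lambda([z])$. The strict $<$ is salvaged because Proposition \ref{chosen_size} already delivers a strict inequality for every nonempty zone — and there is always at least one, since whenever $z$ does not restrict $I_i$ the choice function produces the restriction $C_i(\_)$ of positive size — so at least one term lies strictly below its dyadic bound and the total falls strictly under $\tfrac{1}{2}\lambda([z])$.
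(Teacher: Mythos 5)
Your proposal is correct and matches the paper's own proof: apply Proposition \ref{chosen_size} zone by zone, substitute the hypothesis $Q_i\leq 2^{|I_i|-i-1}$ to get the per-zone bound $2^{-i-1}\lambda([z])$, and sum the geometric series. The extra remarks about degenerate zones and the strictness of the final inequality are sensible but the paper does not dwell on them.
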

\begin{proof}
	By proposition \ref{chosen_size}, on the $i$-th zone, the size of the set of chosen sequences is at most 
	$Q_i2^{-|I_i|}\lambda([z])\leq 2^{-i-1}\lambda([z])$.
	Summing over all $i$ we get the result.
\end{proof}

\begin{lemma}\label{cond_BS-bound_zone_continuable}
	Let $\BS_A,\BS_B$ be a pair of betting strategies.
	Let $Z=(I_1,(N_1,\phi_1)),(I_2,(N_2,\phi_2)),\dots$ be some game zones
	and $z$ some restriction that restricts a finite set of positions.
	
	Denote with $\epsilon_i$ the size of the set of sequences
	for which the bound on the number of splits in the $i$-th zone was violated by $\BS_A$ or $\BS_B$. 

	If the size
	of the set of sequences with low capital consistent with $z$ is larger than $0$, and
	\begin{enumerate}[label=(\roman*)]
		\item $\epsilon_i$ goes to zero as $i$ goes to infinity, and
		\item $N_i\leq |I_i|-\log\phi_i - i$
	\end{enumerate}
	then
	there is a restriction $z'$
	chosen against $\BS_A,\BS_B$ on zones $Z$ for restriction $z$
	such that
	the size
	of the set of sequences with low capital consistent with $z'$ is larger than $0$.
\end{lemma}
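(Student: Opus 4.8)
The plan is to combine the quantitative bound from Proposition \ref{chosen_low_cap} with the measure bound from Lemma \ref{lemma_cond_Q_size}, choosing a single zone index $i$ large enough that all the error terms are controlled. Let $\theta > 0$ denote the size, conditional on $[z]$, of the set of sequences with low capital; by hypothesis $\theta > 0$. I would first observe that for the $i$-th zone, the quantity $\epsilon$ appearing in Proposition \ref{chosen_low_cap} (the conditional size, on $[z]$, of sequences that are $I_i$-split on more than $N_i$ times) is at most $\epsilon_i/\lambda([z])$, since $\epsilon_i$ is the unconditional size of the violating set and $[z]\subseteq\bseqs$. The role of the parameter $h$ is played by the slack in condition (ii): since $N_i\leq |I_i|-\log\phi_i - i$, I may apply Proposition \ref{chosen_low_cap} with $h=i$, so that the $2^{-h}$ term becomes $2^{-i}$.

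Next I would pick the index. Because $\epsilon_i\to 0$ as $i\to\infty$ by (i), and because $2^{-i}\to 0$ and $\lambda([z])$ is a fixed positive constant, I can choose $i$ large enough that simultaneously $\epsilon_i/\lambda([z]) < \theta/3$ and $2^{-i} < \theta/3$, while also ensuring $z$ does not restrict any position in $I_i$ (this is possible since $z$ restricts only finitely many positions and the zones have pairwise disjoint position sets, so all but finitely many zones avoid $\dom z$). With this choice, the numerator in the conclusion of Proposition \ref{chosen_low_cap},
\begin{equation*}
\theta - \epsilon - 2^{-h} \;\geq\; \theta - \frac{\theta}{3} - \frac{\theta}{3} \;=\; \frac{\theta}{3} \;>\; 0,
\end{equation*}
is strictly positive. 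Proposition \ref{chosen_low_cap} then yields a chosen restriction $c = z'$ for the basic game on the $i$-th zone such that the conditional size on $[z']$ of the low-capital set is at least $(\theta - \epsilon - 2^{-h})/Q_i > 0$. Since $z'$ is among the restrictions chosen against $\BS_A,\BS_B$ on zones $Z$ for $z$, this is exactly the desired conclusion.

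The main obstacle I anticipate is bookkeeping around the two different normalizations of "size": Proposition \ref{chosen_low_cap} states $\epsilon$ and $\theta$ as \emph{conditional} sizes on $[z]$, whereas Lemma \ref{cond_BS-bound_zone_continuable} defines $\epsilon_i$ as the \emph{unconditional} size of the violating set, so I must be careful to insert the factor $\lambda([z])$ correctly and to verify that passing from the unconditional $\epsilon_i\to 0$ to the conditional bound $\epsilon_i/\lambda([z])$ still tends to $0$ — which it does, since $\lambda([z])$ is a positive constant independent of $i$. A secondary point to confirm is that Proposition \ref{chosen_low_cap} only requires $N\leq |I|-\log\phi-h$ and not the stronger nullset condition \eqref{cond_nullset}; here condition (ii) supplies exactly the hypothesis needed with $h=i$, so $Q_i$ may be arbitrary (it only weakens, but does not destroy, the positive lower bound, since we only need strict positivity of the conditional low-capital size on $[z']$, not a quantitative rate). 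No appeal to condition \eqref{cond_nullset} is needed in this lemma.
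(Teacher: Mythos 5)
Your proposal is correct and follows essentially the same route as the paper: apply Proposition \ref{chosen_low_cap} with $h=i$ (justified by condition (ii)), bound the conditional violation measure by $\epsilon_i/\lambda([z])$, and use condition (i) to pick $i$ large enough that $\theta-\epsilon_i/\lambda([z])-2^{-i}>0$, which forces a chosen restriction $z'$ with a positive-measure low-capital subset. Your added care about the conditional versus unconditional normalization of $\epsilon_i$ and about selecting a zone whose positions avoid $\dom z$ only makes explicit details the paper leaves implicit.
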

\begin{proof}
	Let $\theta$ be the size, conditional on $[z]$, of the set of sequences with low capital.
	We have that $\theta>0$.
	For large enough $i$, the value $\epsilon_i/\lambda([z])+2^{-i}$ 
	becomes arbitrarily small and
	by proposition \ref{chosen_low_cap}
	if for some $i$, $\theta-\epsilon_i/\lambda([z])-2^{-i}$ is larger than $0$,
	then the the choice function on $I_i,z$ against $\BS_A,\BS_B$
	with parameters $N_i,\phi_i$ chooses a restriction $[z']$
	such that
	the size, conditional on $[z']$,
	of the set of sequences with low capital is larger than $0$.
	Equivalently, the set of sequences with low capital consistent with $z'$ is larger than $0$.
\end{proof}

\begin{definition}
	Let $\BS_A,\BS_B$ be a pair of betting strategies and
	let $Z=(I_1,(N_1,\phi_1)),(I_2,(N_2,\phi_2)),\dots$ be some game zones.
	For $n\in\bbN$, we recursively define sets of restrictions $L_n$.
	Let $L_1$ be the set of restrictions
	chosen against $\BS_A,\BS_B$ on zones $Z$ for the empty restriction.
	Let $L_{n+1}$ be the union over restrictions $z\in L_n$ of 
	the set of restrictions
	chosen against $\BS_A,\BS_B$ on zones $Z$ for restriction $z$.

	We'll call the set of sequences that are, for all $n$, consistent with some restriction in $L_n$
	\textit{the chosen sequences against $\BS_A,\BS_B$ on zones $Z$}.
	We'll call $L_n$ \textit{the $n$-th level} of chosen restrictions
	against $\BS_A,\BS_B$ on zones $Z$.
\end{definition}

\begin{proposition}\label{cond_zones,BSs_Nonuniv}
	Let $\BS_A=(S_A,\mu_A),\BS_B=(S_B,\mu_B)$ be a pair of computable betting strategies with $\mu_A(\Omega)+\mu_B(\Omega)=1$, 
	where $\Omega=\bseqs$
	.
	Let $Z=(I_1,(N_1,\phi_1)),(I_2,(N_2,\phi_2)),\dots$ be some game zones
	.
	
	Denote with $Q_i$ the upper bound on the length of an 
	$(N_i,\phi_i)$-sequence against a pair of partition evaluations of a set with $2^{|I_i|}$ elements.
	
	Denote with $\epsilon_i$ the size of the set of sequences
	for which the bound on the number of splits in the $i$-th zone was violated. 
	
	If
	\begin{enumerate}[label=(\Roman*)]
		\item\label{cond_Q_size} 
		$Q_i\leq 2^{|I_i|-i-1}$, and
		\item\label{cond_viol_size_prop}
		$\epsilon_i$ goes to zero as $i$ goes to infinity, and
		\item\label{cond_N<I-logphi-i} 
		$N_i\leq |I_i|-\log\phi_i - i$
	\end{enumerate}
	then
	there is a non-MLR sequence on which neither of
	$\BS_A,\BS_B$ wins.
\end{proposition}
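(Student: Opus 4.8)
The plan is to exhibit the levels $L_n$ of chosen restrictions as the levels of a Martin-L\"of test whose associated effective nullset contains a sequence of low capital; such a sequence is non-MLR because it fails the test, and neither $\BS_A$ nor $\BS_B$ wins on it because its achieved capital is at most $2$, hence bounded. The three hypotheses feed into three separate parts of the argument: \ref{cond_Q_size} controls the measure of the test levels, while \ref{cond_viol_size_prop} and \ref{cond_N<I-logphi-i} are exactly what is needed to iterate the continuation lemma.

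First I would check that the $L_n$ really form a Martin-L\"of test. Nesting $L_{n+1}\subseteq L_n$ (at the level of open sets) is immediate, since every restriction chosen for some $z\in L_n$ extends $z$, so $[z']\subseteq[z]$. For the measure bound I argue by induction: condition \ref{cond_Q_size} lets me apply Lemma \ref{lemma_cond_Q_size}, which says the total size of the restrictions chosen for a single $z$ across all zones is at most $\tfrac12\lambda([z])$; summing over $z\in L_n$ and using subadditivity, the size of level $n+1$ is at most half that of level $n$, and since $L_1$ has size below $\tfrac12$ we get that level $n$ has size below $2^{-n}$, as required. The enumeration of the basic sets making up each level is computable because $\BS_A,\BS_B$ are computable, hence so are the choice functions defining the $L_n$.

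Next, and this is the core, I would build by induction a chain $z_1\prefixof z_2\prefixof\cdots$ with $z_n\in L_n$ such that $[z_n]$ always contains a set of low-capital sequences of positive size. For the base case I use the normalization $\mu_A(\Omega)+\mu_B(\Omega)=1$ together with a maximal inequality for the mass function: the terminal parts on which a strategy's capital first exceeds $2$ form a prefix-free set, each of size less than half its own mass, and masses over a prefix-free set sum to at most the total mass; hence the set $W$ where either strategy achieves capital larger than $2$ has $\lambda(W)<\tfrac12(\mu_A(\Omega)+\mu_B(\Omega))=\tfrac12$, so $[\Lambda]=\Omega$ carries a low-capital set of size larger than $\tfrac12$. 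The inductive step is precisely Lemma \ref{cond_BS-bound_zone_continuable}, whose hypotheses are conditions \ref{cond_viol_size_prop} and \ref{cond_N<I-logphi-i}: from $[z_n]$ having positive low-capital measure it yields a chosen $z_{n+1}$, extending $z_n$ and therefore lying in $L_{n+1}$, with the same property.

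Finally I would extract the witness by compactness. The low-capital set $D$ is closed, being the complement of the open set $W$ (as soon as a terminal part reaches capital above $2$, the whole part does). Each $[z_n]\cap D$ is then closed, nonempty (positive size), and the sequence is decreasing because $z_n\prefixof z_{n+1}$; compactness of $\bseqs$ gives a sequence $\sigma\in\bigcap_n[z_n]\cap D$. Since $\sigma$ lies in every level it fails the test and is non-MLR, and since $\sigma\in D$ its achieved capital under both strategies is at most $2$, so neither wins. I expect the main obstacle to be bookkeeping rather than any single hard step: one must confirm that the positive-measure low-capital property is genuinely preserved at every stage of the iteration of Lemma \ref{cond_BS-bound_zone_continuable} (so that the chain never terminates), and that the base estimate on $\lambda(W)$ follows cleanly from the maximal inequality given the mass normalization.
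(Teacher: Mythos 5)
Your proposal is correct and follows essentially the same route as the paper's proof: the measure bound on the levels $L_n$ via Lemma \ref{lemma_cond_Q_size} and induction, the chain $z_1\prefixeqof z_2\prefixeqof\cdots$ of positive-low-capital-measure restrictions via iterated application of Lemma \ref{cond_BS-bound_zone_continuable}, and the final extraction by compactness of $[z_n]\cap D$. The only difference is that you spell out the maximal-inequality argument giving $\lambda(W)<\tfrac12$ from $\mu_A(\Omega)+\mu_B(\Omega)=1$, a step the paper states without detail.
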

\begin{proof}
	By lemma \ref{lemma_cond_Q_size}, the sum of sizes of sets of sequences consistent with restrictions in the first level of chosen restrictions
	(against $\BS_A,\BS_B$ on zones $Z$) has size at most $1/2$.
	Suppose that the sum of sizes of sets of sequences consistent with restrictions in $n$-th level of the chosen restrictions
	has size at most $2^{-n}$.
	Again, by lemma \ref{lemma_cond_Q_size} the sum of sizes of sets of sequences consistent with restrictions chosen in $(n+1)$-th level has size at most $2^{-n-1}$.
	By induction, for all $n$, the (open) set of sequences consistent with restrictions in the $n$-th level of chosen restrictions
	has size at most $2^{-n}$. 
	Let this set be the $n$-th level of a ML-test.
	The chosen sequences fail every level of this ML-test, and are therefore
	non-MLR.
	
	Denote the set of sequences on which the betting strategies achieve capital less than $2$ (sequences with low capital) with $D$.
	From $\mu_A(\Omega)+\mu_B(\Omega)=1$ the size of $D$ is at least $1/2$.
	Since every sequence is consistent with the empty restriction,
	the size of the set of sequences with low capital consistent with the empty restriction is therefore larger than $0$.
	By lemma \ref{cond_BS-bound_zone_continuable}
	there is a restriction in the first level of chosen restrictions, 
	$z_1$, such that the subset of sequences with low capital consistent with $z_1$ is larger than $0$.
	Suppose there are some restrictions 
	$z_1\prefixeqof\dots\prefixeqof z_n$ such that $z_i$ is in
	the $i$-th level of chosen restrictions and $[z_n]$ has a subset
	of sequences with low capital larger than $0$.
	Again by lemma \ref{cond_BS-bound_zone_continuable},
	there is an extension of $z_n$ in the $n+1$-th
	level of the chosen restrictions, $z_{n+1}$ with a subset
	of sequences with low capital larger than $0$.
	By induction,
	there is a sequence of restrictions 
	$z_1\prefixeqof z_2\prefixeqof\dots$
	such that all of the sequences in $\bigcap_n [z_n]$ are chosen sequences,
	and for every $n$, $\lambda([z_n]\cap D)>0$.
	Since $D$ is a closed set, the set $[z_n]\cap D$ is closed,
	and (by compactness) the intersection of nonempty closed sets
	$\bigcap_n [z_n]\cap D$ is non empty and therefore
	$\bigcap_n [z_n]$ contains sequences with low capital.
\end{proof}


Suppose that for a betting strategy there is a
computable sequence of positions $\pi=p_1,p_2,\dots$ and an
unbounded computable function $f$,
such that, for an infinite binary sequence $\sigma$ it is almost surely true (w.r.t. $\lambda$) that
for all but finitely many $\ell$,
the betting strategy $\{p_1,\dots,p_\ell\}$-split on $\sigma$ at most $f(\ell)$ times.
We'll say that \textit{the betting strategy on $\pi$ has splits upper bounded by $f$}.

\begin{definition}\label{def_splits_bounded_by_f}
	Let 
	$\BS=(S,\mu)$ be a betting strategy.
	Let 
	$\pi=p_1,p_2,\dots$ be a computable sequence of distinct positions and
	$f$ some computable function.
	Denote with
	$\rho^\sigma_\ell$ a restriction consistent with a sequence $\sigma$ that restricts all except the first $\ell$ positions in $\pi$.

	Let $X$ be the set of sequences
	such that 
	for any sequence $\sigma\in X$  
	for all but finitely many $\ell$,
	for any part of the partition refinement $S$ that contains the sequence,
	the part was split on $\rho^\sigma_\ell$
	at most $f(\ell)$ times.
	If $X$ has size $1$, we'll say that \textit{$\BS$ on $\pi$ has splits upper bounded by $f$}.
	
	Let $X$ be the set of sequences
	such that 
	for any sequence $\sigma\in X$  
	for all but finitely many $\ell$,
	for any part of the partition refinement $S$ that contains the sequence,
	the part was split on $\rho^\sigma_\ell$
	at least $f(\ell)$ times.
	If $X$ has size $1$, we'll say that \textit{$\BS$ on $\pi$ has splits lower bounded by $f$}.
\end{definition}

\subsection{Game zones for bounded splits}
We show that if both betting strategies, on some computable sequence of positions $\pi$, have splits upper bounded by $f(\ell)=\ell-\log\ell -g(\ell)$,
where $g$, called the gap function, is unbounded and computable,
then we can construct game zones, called the zones on positions $\pi$ with gap $g$, that have the properties required by proposition \ref{cond_zones,BSs_Nonuniv}.
This implies there is a non-MLR sequence on which neither betting strategy wins.

\begin{definition}\label{def_zones}
	Let 
	$\pi$ be a computable sequence of distinct positions, and
	$g$ some unbounded computable function.
	We partition the positions $\pi$ into smallest consecutive intervals
	$I_1,I_2,\dots$ such that
	$g(\sum_{i=1}^{k}|I_i|)
	\geq 
	2k+2+\sum_{i=1}^{k-1}|I_i|$.
	For all $k$, let $\phi_k=2^{k+2}|I_k|$ and $N_k=|I_k|-\lfloor\log\phi_k\rfloor-k$. 
	We call the sequence of zones
	$(I_1,N_1,\phi_1),(I_2,N_2,\phi_2),\dots$ 
	\textit{the zones on positions $\pi$ with gap $g$}.
\end{definition}
\begin{lemma}\label{zone_chosen_small}
	Let $(I_i,N_i,\phi_i)$ be the $i$-th zone on positions $\pi$ with gap $g$.
	The upper bound on the length of an 
$(N_i,\phi_i)$-sequence against a pair of partition evaluations of a set with $2^{|I_i|}$ elements is less than $2^{|I_i|-i-1}$.	
\end{lemma}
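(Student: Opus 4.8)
The plan is to prove the bound by a direct substitution of the zone parameters from Definition~\ref{def_zones} into the general upper bound supplied by Proposition~\ref{Ccount}. Writing $n=|I_i|$ for brevity, Proposition~\ref{Ccount} states that any $(N_i,\phi_i)$-sequence against a pair of partition evaluations of a set with $2^{n}$ elements has fewer than
\[
Q_i = 2\cdot\frac{2^{n}}{\phi_i}\,(N_i+1)
\]
elements, so it suffices to show that this $Q_i$ is strictly below $2^{n-i-1}$.

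The key point is that the choice $\phi_i = 2^{i+2}n$ is calibrated exactly so that the factor $2^{n}/\phi_i$ supplies $2^{\,n-i-2}/n$, leaving only $2(N_i+1)$ to be controlled against $n$. First I would note that $\phi_i = 2^{i+2}n \geq 8$, whence $\lfloor\log\phi_i\rfloor \geq 3$; since $N_i = n - \lfloor\log\phi_i\rfloor - i$ this gives $N_i+1 \leq n-3-i+1 < n$. Substituting and cancelling then yields
\[
Q_i = \frac{2^{n}\,(N_i+1)}{2^{i+1}\,n} < \frac{2^{n}\cdot n}{2^{i+1}\,n} = 2^{\,n-i-1},
\]
which is the claim (the strict inequality in the middle survives multiplication by the positive constant $2^{n}/(2^{i+1}n)$, and it holds even when $N_i+1\leq 0$, in which case $Q_i\leq 0$ and the bound is immediate).

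I do not expect any substantial obstacle here: the entire content of the lemma lies in the calibration of $\phi_i$ and $N_i$ already fixed in Definition~\ref{def_zones}, and the present statement merely checks that this calibration secures condition~\ref{cond_Q_size} of Proposition~\ref{cond_zones,BSs_Nonuniv}. The only points demanding a little care are the treatment of the floor in $\lfloor\log\phi_i\rfloor$, which I dispatch through the crude estimate $\lfloor\log\phi_i\rfloor\geq 3$ rather than computing it exactly, and the strictness of the final inequality, which is guaranteed because $N_i+1<n$ holds with room to spare rather than merely $N_i+1\leq n$.
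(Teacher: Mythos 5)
Your proposal is correct and follows essentially the same route as the paper: substitute the zone parameters $\phi_i=2^{i+2}|I_i|$ and $N_i=|I_i|-\lfloor\log\phi_i\rfloor-i$ into the bound $2\frac{2^{|I_i|}}{\phi_i}(N_i+1)$ from Proposition~\ref{Ccount} and bound $N_i+1$ by (at most) $|I_i|$. The only difference is cosmetic: the paper settles for $N_i+1\leq|I_i|$, giving $Q_i\leq 2^{|I_i|-i-1}$ and relying on the strictness already present in Proposition~\ref{Ccount}, whereas you sharpen to $N_i+1<|I_i|$ via $\lfloor\log\phi_i\rfloor\geq 3$ to make the final inequality strict directly.
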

\begin{proof}
	By proposition \ref{Ccount},
	The upper bound on the length of an 
	$(N_i,\phi_i)$-sequence against a pair of partition evaluations of a set with $2^{|I_i|}$ elements is
	$\frac{N_i+1}{\phi_i}2^{|I_i|+1}
	=
	\frac{|I_i|-\lfloor\log\phi_i\rfloor-i+1}{\phi_i}2^{|I_i|+1}
	\leq
	\frac{|I_i|}{\phi_i}2^{|I_i|+1}
	=
	\frac{|I_i|}{2^{i+2}|I_i|}2^{|I_i|+1}
	=
	2^{|I_i|-i-1}
	$
\end{proof}

\begin{lemma}\label{zone_f_bound}
	Let $\pi$ be a computable sequence of positions, and 
	$\BS_A,\BS_B$ be a pair of betting strategies
	that have splits on $\pi$ upper bounded by some $f$.
	
	Let $(I_1,N_1,\phi_1),(I_2,N_2,\phi_2),\dots$ be some game zones
	where $I_1,I_2,\dots$ are consecutive intervals of $\pi$.
	
	If
	$f(\sum_{i=1}^k|I_i|)\leq N_k$ 
	then the size of the set of sequences for which the bound on the number of splits in the $i$-th zone was violated goes to $0$
	when $i$ goes to infinity.
\end{lemma}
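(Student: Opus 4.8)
The plan is to reduce the number of times a sequence is $I_k$-split to the number of times it is split on the \emph{prefix block} $\{p_1,\dots,p_{s_k}\}$ of $\pi$, where $s_k=\sum_{i=1}^k|I_i|$, and then to push the upper-bound hypothesis on $\BS_A,\BS_B$ through a measure-theoretic $\limsup$ argument. First I would fix notation: since $I_1,I_2,\dots$ are consecutive intervals of $\pi$, their union $\bigcup_{i\le k}I_i$ is exactly the set of the first $s_k$ positions of $\pi$, so the restriction $\rho^\sigma_{s_k}$ of Definition \ref{def_splits_bounded_by_f} leaves free precisely $\bigcup_{i\le k}I_i$, whereas the tail restriction appearing in the $I_k$-split count of Definition \ref{def_I-split_on} leaves free only $I_k\subseteq\bigcup_{i\le k}I_i$. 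Because the intervals are finite and there are infinitely many of them, $s_k\to\infty$.

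The key step is a monotonicity observation: for every sequence $\sigma$ and either strategy, the number of times $\sigma$ is $I_k$-split is at most the number of times $\sigma$ is $\{p_1,\dots,p_{s_k}\}$-split. This holds because restricting more positions yields a smaller cylinder, so $[\rho^\sigma_{I_k}]\subseteq[\rho^\sigma_{s_k}]$; consequently any split of a part $v$ into $v_0,v_1$ that is on $\rho^\sigma_{I_k}$ (meaning both $v_0,v_1$ meet $[\rho^\sigma_{I_k}]$) is \emph{a fortiori} a split on $\rho^\sigma_{s_k}$, since both $v_0,v_1$ then also meet the larger set $[\rho^\sigma_{s_k}]$. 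Hence, along the nested path of parts containing $\sigma$, every split counted toward the $I_k$-count is also counted toward the prefix-block count, giving the inequality; both restrictions are granular at all times since each leaves only the positions of a fixed finite set free.

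Next I would invoke the hypothesis. Each of $\BS_A,\BS_B$ has splits on $\pi$ upper bounded by $f$, so (intersecting the two size-$1$ witness sets of Definition \ref{def_splits_bounded_by_f}) there is a size-$1$ set $X$ such that every $\sigma\in X$ is, for all but finitely many $\ell$, $\{p_1,\dots,p_\ell\}$-split at most $f(\ell)$ times by each strategy. Since $s_k\to\infty$, for $\sigma\in X$ this applies to $\ell=s_k$ for all but finitely many $k$; combined with the monotonicity step and the assumption $f(s_k)\le N_k$, such a $\sigma$ is $I_k$-split at most $N_k$ times by each strategy for all but finitely many $k$, i.e. the $k$-th zone bound is not violated. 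Writing $V_k$ for the set of sequences violating the $k$-th zone bound by $\BS_A$ or $\BS_B$, this shows $\limsup_k V_k\subseteq\bseqs\setminus X$, whence $\lambda(\limsup_k V_k)=0$.

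Finally, to convert $\lambda(\limsup_k V_k)=0$ into $\lim_k\lambda(V_k)=0$, i.e. $\epsilon_k\to 0$, I would apply the reverse Fatou inequality for sets in the finite measure space $(\bseqs,\lambda)$: $\limsup_k\lambda(V_k)\le\lambda(\limsup_k V_k)=0$. I expect the main obstacle to be pinning the monotonicity inequality down against the formal definitions — specifically, checking that a split on the tighter tail $\rho^\sigma_{I_k}$ is indeed a split on the looser $\rho^\sigma_{s_k}$ and that this respects the ``maximum over parts containing $\sigma$'' clause of Definition \ref{def_I-split_on} — after which the measure-theoretic step is routine.
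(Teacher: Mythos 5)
Your proposal is correct and follows essentially the same route as the paper's proof: the key monotonicity observation (the $I_k$-split count is bounded by the $\{p_1,\dots,p_{s_k}\}$-split count because $[\rho^\sigma_{I_k}]\subseteq[\rho^\sigma_{s_k}]$), combined with $f(s_k)\leq N_k$ and the size-$1$ hypothesis. The only difference is that you make explicit, via the reverse Fatou inequality, the measure-theoretic step that the paper passes over as immediate from the definition; this is a welcome clarification but not a different argument.
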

\begin{proof}
	Let $P_k$ the union of the intervals of positions of the first $k$ zones
	and let $\ell_k$ be the size of $P_k$, 
	that is,
	$\ell_k=\sum_{i=1}^k|I_i|$.

	If a betting strategy $P_k$-splits on a sequence $n$ many times, then it $I_k$-splits on a sequence at most $n$ many times, since
	$I_n\subseteq P_k$.
	Then from definition \ref{def_splits_bounded_by_f},
	the size of the set of sequences on which either of the
	betting strategies $P_k$-split more than $f(\ell_k)$ times
	goes to zero as $k$ goes to infinity.
	But then also, 	the size of the set of sequences on which either of the
	betting strategies $I_k$-split more than $N_k$ times
	goes to zero as $k$ goes to infinity.
	In other words,
	the size of the set of sequences for which the bound on the number of splits in the $k$-th zone was violated goes to $0$
	when $k$ goes to infinity.
\end{proof}
\begin{lemma}\label{zone_N_bound}
	Let $(I_1,N_1,\phi_1),(I_2,N_2,\phi_2),\dots$ be the zones on positions $\pi$ with gap $g$.
	Let $\BS_A,\BS_B$ be a pair of betting strategies
	that have splits on $\pi$ upper bounded by $f(\ell)=\ell-\log\ell-g(\ell)$.
	
	The size of the set of sequences for which the bound on the number of splits in the $i$-th zone was violated goes to $0$
	when $i$ goes to infinity.
\end{lemma}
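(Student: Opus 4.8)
The plan is to reduce the claim to Lemma~\ref{zone_f_bound}, whose conclusion is exactly what we need. Since the intervals $I_1,I_2,\dots$ of the zones on positions $\pi$ with gap $g$ are, by Definition~\ref{def_zones}, consecutive intervals of $\pi$, and $\BS_A,\BS_B$ have splits on $\pi$ upper bounded by $f$, the only remaining obligation is to verify the hypothesis $f(\ell_k)\le N_k$ of that lemma, where I abbreviate $\ell_k=\sum_{i=1}^{k}|I_i|$, so that $|I_k|=\ell_k-\ell_{k-1}$.

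First I would put $N_k$ in a transparent form. From $\phi_k=2^{k+2}|I_k|$ we get $\log\phi_k=(k+2)+\log|I_k|$, and since $k+2$ is an integer, $\lfloor\log\phi_k\rfloor=(k+2)+\lfloor\log|I_k|\rfloor$. Plugging into $N_k=|I_k|-\lfloor\log\phi_k\rfloor-k$ gives
\[
N_k=|I_k|-2k-2-\lfloor\log|I_k|\rfloor .
\]

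Next I would expand $f(\ell_k)\le N_k$. Writing $f(\ell_k)=\ell_k-\log\ell_k-g(\ell_k)$ and cancelling the common $|I_k|$ (using $\ell_k-|I_k|=\ell_{k-1}$), the inequality becomes
\[
\ell_{k-1}+2k+2+\lfloor\log|I_k|\rfloor\;\le\;\log\ell_k+g(\ell_k).
\]
The interval-selection rule of Definition~\ref{def_zones} is exactly $g(\ell_k)\ge 2k+2+\ell_{k-1}$, so it suffices to check $\lfloor\log|I_k|\rfloor\le\log\ell_k$; this is immediate from $|I_k|\le\ell_k$. Thus $f(\ell_k)\le N_k$ for every $k$, and Lemma~\ref{zone_f_bound} delivers the conclusion.

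The argument is essentially a bookkeeping computation, so I do not expect a genuine obstacle. The only points needing care are that the floor interacts cleanly with the integer shift $k+2$ (used above) and that the base-$2$ logarithm convention is consistent across $f$, the definition of $N_k$, and the factor $2^{k+2}$ in $\phi_k$; the defining inequality $g(\ell_k)\ge 2k+2+\ell_{k-1}$ was evidently designed so that, after cancellation, it dominates the whole right-hand side down to the trivial logarithmic comparison $\lfloor\log|I_k|\rfloor\le\log\ell_k$.
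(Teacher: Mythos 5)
Your proposal is correct and follows essentially the same route as the paper: both reduce the claim to Lemma~\ref{zone_f_bound} by verifying $f(\ell_k)\le N_k$ via the defining inequality $g(\ell_k)\ge 2k+2+\ell_{k-1}$ of Definition~\ref{def_zones} together with $\log|I_k|\le\log\ell_k$. The only cosmetic difference is that you unfold the floor in $\lfloor\log\phi_k\rfloor$ explicitly, whereas the paper simply uses $N_k\ge |I_k|-\log\phi_k-k$.
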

\begin{proof}
	Let $\ell_k$ be the length of the initial segment of positions in $\pi$ contained in the intervals of positions of the first $k$ zones, that is,
	$\ell_k=\sum_{i=1}^k|I_i|$.
	
	$\ell_k-\log\ell_k-g(\ell_k)
	\leq
	\sum_{i=1}^k|I_i|-\log(\sum_{i=1}^k|I_i|)-\sum_{i=1}^{k-1}|I_i|- 2k -2
	\leq
	|I_k|-\log|I_k|-2k-2
	=
	|I_k|-(k+2+\log|I_k|)-k
	=
	|I_k|-\log\phi_k-k
	\leq N_k$
	.
	
	By definition \ref{def_zones}, $I_1,I_2,\dots$ are consecutive intervals of $\pi$ and the result follows from
	lemma \ref{zone_f_bound}.
\end{proof}

\begin{theorem}\label{th_upper_bound}
	Let $\BS_A=(S_A,\mu_A),\BS_B=(S_B,\mu_B)$ be a pair of computable betting strategies with $\mu_A(\Omega)+\mu_B(\Omega)=1$
	.
	Let $\pi$ be a computable sequence of distinct positions,
	$g$ an unbounded computable function and
	let $Z$ be the zones on positions $\pi$ with gap $g$.
	
	If both $\BS_A,\BS_B$ on $\pi$ have splits upper bounded by
	$\ell-\log\ell-g(\ell)$,
	there is a non-Martin-L\"of random sequence on which neither strategy wins.
\end{theorem}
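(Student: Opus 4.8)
The plan is to read the theorem as the immediate payoff of Proposition \ref{cond_zones,BSs_Nonuniv}: once I have a pair of computable betting strategies with $\mu_A(\Omega)+\mu_B(\Omega)=1$ together with a system of game zones satisfying the three hypotheses \ref{cond_Q_size}, \ref{cond_viol_size_prop} and \ref{cond_N<I-logphi-i}, that proposition already produces a non-MLR sequence on which neither strategy wins. So the entire proof reduces to instantiating the zones as the zones on positions $\pi$ with gap $g$ of Definition \ref{def_zones} and then checking those three conditions one at a time.

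First I would observe that these zones are well defined and computable. Given $I_1,\dots,I_{k-1}$ with $\ell_{k-1}=\sum_{i<k}|I_i|$ already fixed, extending the next consecutive interval $I_k$ makes $\ell_{k-1}+|I_k|$ grow without bound, and since $g$ is unbounded the defining inequality $g(\ell_{k-1}+|I_k|)\ge 2k+2+\ell_{k-1}$ must eventually hold; taking the smallest such $I_k$ yields a computable partition because $g$ and $\pi$ are computable. This supplies the effectivity that the conclusion (a \emph{computable} ML-test, hence a genuinely non-MLR witness) silently needs.

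Next I would verify the three conditions. Condition \ref{cond_Q_size}, namely $Q_i\le 2^{|I_i|-i-1}$, is exactly Lemma \ref{zone_chosen_small}, which exploits the calibration $\phi_k=2^{k+2}|I_k|$ to absorb the factor $N_i+1\le |I_i|$ coming out of Proposition \ref{Ccount}. Condition \ref{cond_N<I-logphi-i}, namely $N_i\le |I_i|-\log\phi_i-i$, follows from the choice $N_k=|I_k|-\lfloor\log\phi_k\rfloor-k$ in Definition \ref{def_zones}. Condition \ref{cond_viol_size_prop}, that $\epsilon_i\to0$, is the only point where the hypothesis on the strategies is used: since both $\BS_A,\BS_B$ on $\pi$ have splits upper bounded by $f(\ell)=\ell-\log\ell-g(\ell)$ and the $I_i$ are consecutive intervals of $\pi$, Lemma \ref{zone_N_bound} delivers precisely this convergence.

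With the three conditions in hand, Proposition \ref{cond_zones,BSs_Nonuniv} closes the argument. The genuinely load-bearing work has already been discharged inside the preceding lemmas — Lemma \ref{zone_N_bound} translating the analytic bound $\ell-\log\ell-g(\ell)$ into the per-zone split bound $N_k$, and Lemma \ref{zone_chosen_small} certifying the measure estimate needed for a legitimate ML-test — so what remains here is essentially the bookkeeping that matches the zone parameters $|I_i|,N_i,\phi_i$ against the three inequalities. I expect the only point requiring care to be confirming that the gap calibration of Definition \ref{def_zones} (the $2k+2$ slack together with $\phi_k=2^{k+2}|I_k|$) is simultaneously strong enough for conditions \ref{cond_Q_size} and \ref{cond_N<I-logphi-i} and still attainable for an arbitrary unbounded $g$ — which is exactly what the chain of lemmas was engineered to guarantee.
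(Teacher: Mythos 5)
Your proposal matches the paper's proof exactly: it cites Lemma \ref{zone_chosen_small} for condition \ref{cond_Q_size}, Lemma \ref{zone_N_bound} for condition \ref{cond_viol_size_prop}, and Definition \ref{def_zones} for condition \ref{cond_N<I-logphi-i}, then applies Proposition \ref{cond_zones,BSs_Nonuniv}. Your additional remark on the well-definedness and computability of the zone partition (using that $g$ is unbounded and computable) is a sensible detail the paper leaves implicit, but otherwise the two arguments are identical.
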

\begin{proof}
	By lemmas \ref{zone_chosen_small}, 
	\ref{zone_N_bound},
	and definition \ref{def_zones}
	the conditions
	 \ref{cond_Q_size},
	\ref{cond_viol_size_prop}, 
	\ref{cond_N<I-logphi-i}
	of proposition \ref{cond_zones,BSs_Nonuniv} are fullfilled
	and the result follows.
\end{proof}

\subsection{Half-splitting game zones for upper bounded splits}
For a pair of half-splitting betting strategies, 
the length of an $(N,\phi)$-sequence 
against a pair of partition evaluations of a set with $2^{|I|}$ elements
depends only on $\phi$ (proposition \ref{Ccount}).
This allows us to get a better bound on the number of splits.
Suppose that the betting strategies,
on some computable sequence of positions $\pi$, have splits upper bounded by $f(\ell)=\ell -g(\ell)$,
where $g$ is unbounded and computable,
then we can construct game zones, called the half-splitting zones on positions $\pi$ with gap $g$, that have the properties required by proposition \ref{cond_zones,BSs_Nonuniv}.
This implies there is a non-MLR sequence on which neither betting strategy wins.

\begin{definition}\label{def_half-splitting_zones}
	Let 
	$\pi$ be a computable sequence of distinct positions, and
	$g$ some unbounded computable function.
	We partition the positions $\pi$ into smallest consecutive intervals
	$I_1,I_2,\dots$ such that
	$g(\sum_{i=1}^{k}|I_i|)\geq 2k +4+\sum_{i=1}^{k-1}|I_i|$.
	For all $k$, let $\phi_k=6\cdot2^{k+1}$ and $N_k=|I_k|-2k-4$. 
	We call the sequence of zones
	$(I_1,N_1,\phi_1),(I_2,N_2,\phi_2),\dots$ 
	\textit{the half-splitting zones on positions $\pi$ with gap $g$}.
\end{definition}
\begin{lemma}\label{half-betting_zone_chosen_small}
	Let $(I_i,N_i,\phi_i)$ be the $i$-th half-betting zone on positions $\pi$ with gap $g$.
	The upper bound on the length of an 
	$(N_i,\phi_i)$-sequence against a pair of partition evaluations of a set with $2^{|I_i|}$ elements is less than $2^{|I_i|-i-1}$.	
\end{lemma}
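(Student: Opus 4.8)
The plan is to obtain the claim as an immediate specialization of the half-splitting bound of Proposition \ref{Ccount} to the parameters fixed in Definition \ref{def_half-splitting_zones}. The essential feature to exploit is that a pair of half-splitting betting strategies induces, on each $\rho\in R_{I_i}$, a pair of partition evaluations whose every split divides a part into two parts of \emph{equal} cardinality. This is exactly the hypothesis under which Proposition \ref{Ccount} delivers the sharper count $|C|\le 6\cdot 2^{\ell}/\phi$, in which $\ell$ is the log-cardinality of the underlying set and, crucially, the bound is \emph{independent of $N$}. This $N$-insensitivity is precisely what distinguishes this lemma from its general-case analogue, Lemma \ref{zone_chosen_small}: there the $(N_i+1)$ factor forced $\phi_i$ to grow like $2^{i}|I_i|$, whereas here we may take $\phi_i=6\cdot 2^{i+1}$ depending on $i$ alone.

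Concretely, I would set $\ell=|I_i|$ and instantiate Proposition \ref{Ccount} with $\phi=\phi_i=6\cdot 2^{i+1}$, so that the length of any $(N_i,\phi_i)$-sequence against a pair of partition evaluations of a set with $2^{|I_i|}$ elements is bounded by
\[
6\cdot\frac{2^{|I_i|}}{\phi_i}
=6\cdot\frac{2^{|I_i|}}{6\cdot 2^{i+1}}
=\frac{2^{|I_i|}}{2^{i+1}}
=2^{|I_i|-i-1}.
\]
The two factors of $6$ cancel by design, leaving exactly $2^{|I_i|-i-1}$. In particular the length is at most $2^{|I_i|-i-1}$, which is precisely the form demanded by condition \ref{cond_Q_size} of Proposition \ref{cond_zones,BSs_Nonuniv}; this is the sole reason the constant $6$ was built into $\phi_i$.

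I expect no real obstacle: once Proposition \ref{Ccount} is available the lemma is a one-line substitution, structurally identical to the proof of Lemma \ref{zone_chosen_small} but with the constant $3$ of the half-splitting split-count replacing the $N$-dependent factor $(N_i+1)$. The only points meriting a word of care are, first, checking that the half-splitting branch of Proposition \ref{Ccount} genuinely applies here (that the projection preserves the equal-size splitting property, so that the $N$-free bound is legitimate rather than the general $2\cdot 2^{\ell}/\phi\,(N_i+1)$ one), and second, the harmless fact that the displayed computation yields the value $2^{|I_i|-i-1}$ with equality; since the number of elements of the sequence is an integer bounded above by this quantity, the stated bound holds in the form needed downstream.
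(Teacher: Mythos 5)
Your proposal is correct and is essentially identical to the paper's own (one-line) proof: both simply substitute $\phi_i=6\cdot 2^{i+1}$ into the half-splitting bound $6\cdot 2^{|I_i|}/\phi_i$ of Proposition \ref{Ccount}. Your additional remarks about the $N$-independence of the half-splitting bound and the $\leq$ versus $<$ wording are accurate but not needed beyond what the paper records.
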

\begin{proof}
	By definition \ref{def_half-splitting_zones}, $\phi_i=6\cdot2^{i+1}$ and the result follows from proposition \ref{Ccount}.
\end{proof}
\begin{lemma}\label{half-splitting_N_k<I-logphi_k-k}
	Let $(I_k,N_k,\phi_k)$ be the $k$-th half-splitting zone on positions $\pi$ with gap $g$.
	We have 
	$N_k\leq |I_k|-\log\phi_k-k$.
\end{lemma}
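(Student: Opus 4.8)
The plan is to prove this by directly substituting the definitions from Definition \ref{def_half-splitting_zones} and then invoking a single numerical estimate on the binary logarithm. Recall that for the half-splitting zones we have set $\phi_k = 6\cdot 2^{k+1}$ and $N_k = |I_k| - 2k - 4$, so the whole statement reduces to an arithmetic inequality once these values are plugged in.

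First I would rewrite $\log\phi_k$. Since $\log$ denotes the binary logarithm throughout the paper (as witnessed by $\lambda([s]) = 2^{-|s|}$ and by the identity $2^{\ell-\log\phi+1} = 2\cdot 2^\ell/\phi$ in Proposition \ref{Ccount}), we have
\[
\log\phi_k = \log\!\left(6\cdot 2^{k+1}\right) = \log 6 + k + 1.
\]
The only genuine estimate needed is $\log 6 \leq 3$, i.e.\ the fact that $6 \leq 2^3 = 8$. This yields $\log\phi_k \leq k + 4$.

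Next I would substitute this into the right-hand side of the claimed inequality:
\[
|I_k| - \log\phi_k - k \;\geq\; |I_k| - (k+4) - k \;=\; |I_k| - 2k - 4 \;=\; N_k,
\]
which is exactly the desired bound.

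There is no real obstacle here; the lemma is essentially a bookkeeping check. The one point worth flagging is that the constant $6$ appearing in $\phi_k$ is precisely the half-splitting bound $|C|\leq 6\cdot 2^{\ell}/\phi$ from Proposition \ref{Ccount}, and the slack in the inequality is driven entirely by $\log 6 \leq 3$. This lemma, together with Lemma \ref{half-betting_zone_chosen_small}, is what certifies that the half-splitting zones satisfy condition \ref{cond_N<I-logphi-i} of Proposition \ref{cond_zones,BSs_Nonuniv}, so the slightly wasteful choice $N_k = |I_k| - 2k - 4$ (rather than the tighter $|I_k| - \log\phi_k - k$) is deliberate and keeps the definitions free of floor functions.
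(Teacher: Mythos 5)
Your proof is correct and follows the same route as the paper: substitute $\phi_k = 6\cdot 2^{k+1}$ and $N_k = |I_k|-2k-4$ from Definition \ref{def_half-splitting_zones}, expand $\log\phi_k = \log 6 + k + 1$, and use $\log 6 \leq 3$. The paper's own proof is the identical one-line computation.
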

\begin{proof}
	$|I_k|-\log\phi_k-k
	=
	|I_k|-\log6\cdot2^{k+1}-k
	=
	|I_k|-\log6 -2k-1
	\geq
	|I_k|-2k-4
	=
	N_k
	$
\end{proof}
\begin{lemma}\label{zone_N_bound_half-splitting}
	Let $(I_1,N_1,\phi_1),(I_2,N_2,\phi_2),\dots$ be the 
	half-splitting zones on positions $\pi$ with gap $g$.
	Let $\BS_A,\BS_B$ be a pair of half-splitting betting strategies
that have splits on $\pi$ upper bounded by $f(\ell)=\ell-g(\ell)$.

The size of the set of sequences for which the bound on the number of splits in the $i$-th zone was violated goes to $0$
when $i$ goes to infinity.
\end{lemma}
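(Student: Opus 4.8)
The plan is to follow exactly the two-step strategy already used for lemma \ref{zone_N_bound}. First I would reduce the statement to a single numerical inequality relating the split-bound function $f$ to the zone parameter $N_k$, and then I would invoke lemma \ref{zone_f_bound} verbatim. Concretely, set $\ell_k=\sum_{i=1}^k|I_i|$, the length of the initial segment of $\pi$ covered by the first $k$ half-splitting zones. By definition \ref{def_half-splitting_zones} the intervals $I_1,I_2,\dots$ are consecutive intervals of $\pi$, so lemma \ref{zone_f_bound} applies the moment we have verified $f(\ell_k)\leq N_k$ for every $k$; its conclusion is precisely that the size of the set of sequences violating the split bound in the $i$-th zone tends to $0$ as $i\to\infty$.

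The remaining work is the verification of $f(\ell_k)=\ell_k-g(\ell_k)\leq N_k$. The defining property of the half-splitting zones is $g(\ell_k)\geq 2k+4+\sum_{i=1}^{k-1}|I_i|$, whence
\[
\ell_k-g(\ell_k)\leq \ell_k-2k-4-\sum_{i=1}^{k-1}|I_i|=|I_k|-2k-4=N_k,
\]
using $\ell_k-\sum_{i=1}^{k-1}|I_i|=|I_k|$ together with the definition $N_k=|I_k|-2k-4$. This establishes the inequality and hence the lemma.

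I do not expect any genuine obstacle here. Note first that the half-splitting hypothesis plays no direct role in the argument: it enters only through the choice of parameters in definition \ref{def_half-splitting_zones}, which in turn rests on the improved length bound of proposition \ref{Ccount}; the proof itself is just the displayed inequality plus lemma \ref{zone_f_bound}, neither of which mentions half-splitting. The parameters were chosen precisely so that the $\sum_{i=1}^{k-1}|I_i|$ term in the gap condition cancels the same sum that appears when $\ell_k$ is split off $|I_k|$, leaving exactly $N_k$. The only point needing minor care is that the threshold constant in the gap condition is $2k+4$ here (rather than the $2k+2$ of the general zones in definition \ref{def_zones}), reflecting the larger $\phi_k=6\cdot2^{k+1}$ demanded by the half-splitting bound; crucially, because $f$ now lacks the $\log\ell$ term present in lemma \ref{zone_N_bound}, the cancellation is exact and no logarithmic remainder survives.
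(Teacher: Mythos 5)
Your proof is correct and follows essentially the same route as the paper: the same reduction to the inequality $f(\ell_k)=\ell_k-g(\ell_k)\leq N_k$ via the defining gap condition of definition \ref{def_half-splitting_zones}, followed by an appeal to lemma \ref{zone_f_bound}. The additional remarks about the role of the half-splitting hypothesis and the exact cancellation are accurate but not needed for the argument.
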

\begin{proof}
	Let $\ell_k$ be the length of the initial segment of positions in $\pi$ contained in the intervals of positions of the first $k$ zones, that is,
	$\ell_k=\sum_{i=1}^k|I_i|$.
	
	$\ell_k-g(\ell_k)
	\leq
	\sum_{i=1}^k|I_k|-\sum_{i=1}^{k-1}|I_i|- 2k -4
	=
	|I_k|-2k-4
	=
	N_k
	$
	
	By definition \ref{def_half-splitting_zones}, $I_1,I_2,\dots$ are consecutive intervals of $\pi$ and the result follows from
	lemma \ref{zone_f_bound}.
\end{proof}

\begin{theorem}\label{th_upper_bound_half-splitting}
	Let $\BS_A,\BS_B$ be a pair of betting strategies that are half-splitting.
	Let $\pi$ be a computable sequence of distinct positions,
	$g$ an unbounded computable function and
	let $Z$ be the half-splitting zones on positions $\pi$ with gap $g$.
	
	If both $\BS_A,\BS_B$ on $\pi$ have splits upper bounded by
	$\ell-g(\ell)$,
	there is a non-Martin-L\"of random sequence on which neither strategy wins.
\end{theorem}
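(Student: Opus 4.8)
The plan is to reduce this to proposition~\ref{cond_zones,BSs_Nonuniv} in exactly the way theorem~\ref{th_upper_bound} was reduced to it, but now invoking the half-splitting zones of definition~\ref{def_half-splitting_zones} instead of the general zones of definition~\ref{def_zones}. Concretely, I would verify the three hypotheses \ref{cond_Q_size}, \ref{cond_viol_size_prop}, \ref{cond_N<I-logphi-i} of that proposition for the half-splitting zones $Z$ on $\pi$ with gap $g$, and then apply it to obtain a non-MLR sequence on which neither $\BS_A$ nor $\BS_B$ wins.

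First I would dispatch the two conditions that hold for the half-splitting zones regardless of the strategies' behaviour. Condition \ref{cond_Q_size}, that $Q_i\leq 2^{|I_i|-i-1}$, is precisely lemma~\ref{half-betting_zone_chosen_small}; the crux is that for half-splitting strategies proposition~\ref{Ccount} bounds the length of an $(N,\phi)$-sequence by $6\cdot 2^{|I|}/\phi$, depending only on $\phi$, so the choice $\phi_i=6\cdot 2^{i+1}$ makes this equal to $2^{|I_i|-i-1}$. Condition \ref{cond_N<I-logphi-i}, that $N_i\leq |I_i|-\log\phi_i-i$, is lemma~\ref{half-splitting_N_k<I-logphi_k-k}, a direct arithmetic check from $\phi_k=6\cdot 2^{k+1}$, $N_k=|I_k|-2k-4$ and $\log 6\leq 3$.

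The only condition that consumes the hypothesis on the strategies is \ref{cond_viol_size_prop}, that $\epsilon_i\to 0$, which is lemma~\ref{zone_N_bound_half-splitting}. Its proof rests on the fact that the $I_k$ are consecutive intervals of $\pi$, so $I_k$-splits are dominated by $P_k$-splits on the union $P_k$ of the first $k$ intervals; this lets the assumed upper bound $f(\ell)=\ell-g(\ell)$ on splits along $\pi$ transfer to the zones through lemma~\ref{zone_f_bound}, the needed inequality $\ell_k-g(\ell_k)\leq N_k$ being exactly what the gap condition $g(\sum_{i\leq k}|I_i|)\geq 2k+4+\sum_{i<k}|I_i|$ of definition~\ref{def_half-splitting_zones} secures. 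With the three conditions in place, proposition~\ref{cond_zones,BSs_Nonuniv} yields the theorem.

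I do not anticipate a genuine obstacle: the substantive content already lives in proposition~\ref{cond_zones,BSs_Nonuniv} and the half-splitting lemmas, and this step is an assembly mirroring theorem~\ref{th_upper_bound}. The points demanding care are the two hypotheses of proposition~\ref{cond_zones,BSs_Nonuniv} not restated here, namely computability of the strategies (needed so the levels of chosen restrictions form a bona fide ML-test) and the normalization $\mu_A(\Omega)+\mu_B(\Omega)=1$. I would handle the latter by rescaling both mass functions by a common positive constant at the outset, which affects neither the half-splitting property nor whether a strategy's achieved capital is unbounded, and which makes the low-capital set have size at least $1/2$ so the compactness argument inside proposition~\ref{cond_zones,BSs_Nonuniv} applies.
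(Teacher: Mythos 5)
Your proposal matches the paper's proof exactly: it verifies conditions \ref{cond_Q_size}, \ref{cond_viol_size_prop}, and \ref{cond_N<I-logphi-i} via lemmas \ref{half-betting_zone_chosen_small}, \ref{zone_N_bound_half-splitting}, and \ref{half-splitting_N_k<I-logphi_k-k}, then applies proposition \ref{cond_zones,BSs_Nonuniv}. Your extra remark about restoring the computability and normalization hypotheses by rescaling is a sensible addition (the paper's statement of this theorem omits them, unlike theorem \ref{th_upper_bound}), but the argument is essentially identical.
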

\begin{proof}
	By lemmas \ref{half-betting_zone_chosen_small}, 
	\ref{zone_N_bound_half-splitting},
	\ref{half-splitting_N_k<I-logphi_k-k}
	the conditions
	\ref{cond_Q_size},
	\ref{cond_viol_size_prop}, 
	\ref{cond_N<I-logphi-i}
	of proposition \ref{cond_zones,BSs_Nonuniv} are fullfilled
	and the result follows.
\end{proof}

\section{ML-test for a pair of KLBS-es with lower bounded splits}

For a pair of Kolmogorov-Loveland betting strategies that have
splits lower bounded by $\ell-g(\ell)$, where $g$ is sublinear,
we have that for any
two constants $c,\epsilon$ and large enough $\ell,t$,
the set of sequences that were $[1,\ell]$-split on
at least $\ell(1-c)$ many times by
both KLBS-es up to time $t$,
has size at least $1-\epsilon$.
This implies that among first $\ell$ positions,
there is a position $p$ such that
the set of sequences that were $\{p\}$-split on
(once) by
both KLBS-es up to time $t$,
has size at least $(1-\epsilon)(1-c)\geq 1-\epsilon-c$.
Furthermore, for any sequence of bounds $\xi_1,\xi_2,\dots$,
we can find positions $\pi=p_1,p_2,\dots$ and times $t_1,t_2,\dots$
such that
the set of sequences that were $\{p_i\}$-split on
by
both KLBS-es up to time $t_i$,
has size at least $1-\xi_i$.

We can pick small enough bounds $\xi_1,\xi_2,\dots$
so that for any $n$ and any restriction $r$
that restricts first $n$ positions in $\pi$,
there is some $t_n$, such that the size, conditional on $[r]$,
of the set of sequences that were $\{p_i\}$-split on by both KLBS-es up to time $t_i$, for all $i\leq n$ is larger than $1/2$.

We construct an infinite restriction $\zeta$ that restricts the positions in $\pi$ and contains a sequence on which neither KLBS wins.
The proof is somewhat similar to the proofs that permutation random sequences
are the same as computably random sequences \cite{Lempp}, \cite{Merkle}, \cite{Laurent}, which use tools introduced in \cite{BPP_EXP}.
The difference is that, instead of always betting on all positions in some prescribed order (as in the definition of permutation randomness in \cite{Open}), here we'll have that a KLBS bets on almost all positions in $\pi$ almost surely, and can do so adaptively.
 
Suppose a finite restriction $z_{n-1}$ that restricts the first $n-1$ positions in $\pi$ is already defined.
Denote  with $q_{n-1}$ the sum of masses over a subset of parts contained
in $[z_{n-1}]$,
that have split on position $p_i$ by the time $t_i$, for all $i\leq n-1$.

The restriction $z_n$ extends $z_{n-1}$ by restricting the position
$p_n$ to a value that minimizes $q_n$, the sum of masses over
a subset of parts contained in $[z_{n}]$,
that have split on position $p_i$ by the time $t_i$, for all $i\leq n$.
This implies that $q_n$ is less than $q_{n-1}/2$, and by induction 
is for all $n$ less than $2^{-n}$, that is, the size of $[z_n]$.

On the other hand, for all $n$, the size, conditional on $[z_n]$,
of the set of sequences that were, for all $i\leq n$,
$\{p_i\}$-split on by both KLBS-es up to time $t_i$,  
is larger than $1/2$.
This implies the set $[z_n]$ at all times $t$ contains a sequence
on which the capital of terminal parts that contain the sequence is below
$2$, as otherwise $q_n$ would be larger than $\lambda([z_n])$.

This is still not enough to prove the claim that the restriction $z_0\prefixeqof z_1\prefixeqof\dots\zeta$ contains a sequence on which neither KLBS wins, since we need to look at the maximum over all $t$ of capital 
of a terminal part that contains the sequence.
To remedy this, we use the "savings trick",
\begin{proposition}["slow-but-sure winnings" lemma in \cite{BPP_EXP}]\label{savings_trick}
For any given betting strategy $\BS=(S,\mu)$, we can construct 
$\BS'=(S,\mu')$ that
wins on every sequence on which $\BS$ wins,
and the difference between the capital and the maximal capital of a part is bounded. 
\end{proposition}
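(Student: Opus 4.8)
The plan is to use the classical ``savings account'' construction: redistribute the mass of $\BS$ between an \emph{active} sub-account that keeps placing exactly the bets of $\BS$ and a \emph{reserve} sub-account which, once money has been deposited into it, is never wagered again. Because the reserve is monotone while the active account is kept below a fixed ceiling, the current capital can never fall far below its historical maximum, yet the reserve still grows to infinity on every sequence where $\BS$ wins. The key point making this compatible with the statement is that the whole construction merely redistributes capital among the parts already produced by $S$, so $\BS'$ can reuse the very same partition refinement $S$ and only the mass function changes.

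Concretely, along each coordinate $x$ I would maintain two nonnegative quantities, the active capital $a(x)$ and the reserve $b(x)$, and set the new capital to $c'(x)=a(x)+b(x)$, equivalently $\mu'(x)=c'(x)\lambda(S^t(x))$. At the root take $a(\Lambda)=c(\Lambda)$ and $b(\Lambda)=0$ (note $c(\Lambda)=\mu(\Lambda)\le 1<2$ under the normalization used later). When $S$ splits the part with coordinate $x$ into the parts with coordinates $x0,x1$, I let the active account follow the \emph{same multiplicative factors} as $\BS$, i.e. tentatively $a(xi)=\tfrac{c(xi)}{c(x)}\,a(x)$ and $b(xi)=b(x)$ for $i\in\{0,1\}$, and then apply the banking rule: whenever the tentative $a(xi)\ge 2$, move all but one unit into the reserve, setting $b(xi):=b(x)+(a(xi)-1)$ and $a(xi):=1$. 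Since banking only relabels money inside a single coordinate it leaves $c'(xi)$ unchanged, and using the identity $c(x0)\lambda(v_0)+c(x1)\lambda(v_1)=\mu(x)=c(x)\lambda(v)$ one gets $c'(x0)\lambda(v_0)+c'(x1)\lambda(v_1)=a(x)\lambda(v)+b(x)\lambda(v)=c'(x)\lambda(v)$; hence $\mu'$ satisfies the mass-function identity and $\BS'=(S,\mu')$ is a genuine betting strategy.

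It then remains to verify the two required properties. For the bounded-drawdown property the banking rule guarantees $a(x')<2$ after every coordinate, the reserve is nondecreasing, and $c'\ge b$; so along any path $\hat c'(x)=\max_{x'\prefixeqof x}(a(x')+b(x'))< 2+b(x)\le 2+c'(x)$, giving $\hat c'(x)-c'(x)< 2$ for every coordinate $x$. For inclusion of the winning sets, observe that between two consecutive resets the active capital equals $c(\cdot)/c_x$, where $c_x$ is the original capital at the previous reset, so a reset is triggered exactly when the original capital at least doubles relative to $c_x$. If $\BS$ wins on $\sigma$ then $\sup_t c=\infty$ along $\sigma$, so after every reset the original capital eventually exceeds $2c_x$, forcing infinitely many resets, each depositing at least one unit into the reserve; thus $b\to\infty$ along $\sigma$ and $c'\ge b$ is unbounded, i.e. $\BS'$ wins on $\sigma$.

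The only real obstacle is that, unlike a half-splitting strategy, a general betting strategy may multiply its capital by an unbounded factor on a single correct bet, so the tentative $a(xi)$ can jump far above the ceiling $2$ in one step. The fix is exactly to bank \emph{down to the fixed level $1$ immediately after every bet}, so that any overshoot is absorbed into the safe reserve at once and the drawdown bound stays a constant independent of the bet sizes. I would also record the harmless degenerate case $c(x)=0$, where the ratios $c(xi)/c(x)$ are undefined but the whole subtree carries zero mass, so $a$ and $b$ simply freeze and no winning path is affected, and remark that the construction is effective, hence preserves computability of $\BS$.
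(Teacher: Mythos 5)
Your proof is correct. The paper gives no proof of this proposition at all --- it is imported verbatim from \cite{BPP_EXP} --- and your savings-account construction is precisely the standard argument behind that lemma, correctly adapted to this paper's formalism: you keep the same partition refinement and only change the mass function, you verify the identity $\mu'(x)=\mu'(x0)+\mu'(x1)$ via $c(x0)\lambda(v_0)+c(x1)\lambda(v_1)=c(x)\lambda(v)$, and your decision to bank down to a fixed level immediately after every split is exactly what is needed so that the unbounded per-bet multiplication factor of a general (non-half-splitting) betting strategy cannot spoil the constant drawdown bound $\hat c'(x)-c'(x)<2$.
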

If the difference between the capital and the maximal capital of a part is bounded, then the set $[z_n]$, at all times $t$, contains a sequence
on which the betting strategies achieve bounded capital. That is, the KLBS pair doesn't win on some sequence in $[z_n]$, for all $n$. But then $[\zeta]$ also contains a sequence on which the KLBS-es do not win. Since $[\zeta]$ is an effective nullset, all of the sequences in it are non-MLR.

\begin{theorem}\label{th_lower_bound}
	If a pair of computable KLBS-es has splits, on some sequence of positions $\pi$,
	lower bounded by $\ell-o(\ell)$,
	then there is a non-MLR sequence on which neither strategy wins.
\end{theorem}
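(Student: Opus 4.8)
The plan is to assemble the machinery of this section into a single construction, following the informal outline above. First I would apply the savings trick (Proposition \ref{savings_trick}) to each strategy, replacing $\BS_A,\BS_B$ by modified strategies that win on exactly the same sequences but keep the gap between current and maximal capital bounded by some constant $B$; since the trick alters only the mass function and not the partition refinement, the lower bound on splits is preserved. After rescaling so that the combined initial mass is at most $1$, it suffices to produce a sequence lying in an effective nullset on which both modified strategies keep capital bounded.

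Next I would extract a good sequence of test positions. Writing $K_i$ for the finite, sequence-independent set of positions inspected by both refinements up to a time $t_i$, I would choose positions $p_1,p_2,\dots$ and times $t_1<t_2<\cdots$ greedily: at stage $i$, pick $p_i\notin K_{i-1}$ and $t_i$ so large that, conditional on \emph{every} restriction of $p_1,\dots,p_{i-1}$, the set of sequences on which both strategies split on $p_i$ by time $t_i$ has conditional size at least $1-\delta_i$, where $\sum_i\delta_i<1/2$. The lower bound on splits guarantees that, for large $\ell$ and $t$, almost every sequence is split on a $(1-o(1))$-fraction of the first $\ell$ positions, so an averaging argument over positions outside $K_{i-1}$ produces such a $p_i$; the extra factor $2^{-(i-1)}$ of slack needed to serve all $2^{i-1}$ restrictions of the earlier positions at once is absorbed into a larger $t_i$.

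The two observations that make this choice pay off are that whether a strategy bets on $p_i$ before time $t_i$ is decided \emph{before} the value at $p_i$ is revealed, hence is independent of that value, and that this event depends only on positions in $K_i$, hence (as all later positions are chosen outside $K_i$) is independent of $p_{i+1},p_{i+2},\dots$. Consequently, conditioning on a full restriction $z_n$ of $p_1,\dots,p_n$ does not amplify the failure probability of the $i$-th split event beyond $\delta_i$, and a union bound gives, for every such $z_n$, that the set of sequences split on all of $p_1,\dots,p_n$ has size exceeding $\tfrac12\lambda([z_n])$. In parallel I would build $\zeta$ as the common extension of a chain $z_1\prefixeqof z_2\prefixeqof\cdots$ one bit at a time, at each step fixing the value of $p_n$ that minimizes the total mass $q_n$ of the parts inside $[z_n]$ that have split on all of $p_1,\dots,p_n$; mass conservation across the split on $p_n$ forces $q_n\le q_{n-1}/2$, so $q_n<2^{-n}=\lambda([z_n])$.

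Finally I would combine the two estimates. If at some time $t$ every well-split sequence in $[z_n]$ had capital at least $2$ under one of the strategies, the mass of the corresponding well-split parts would exceed $2\cdot\tfrac12\lambda([z_n])=\lambda([z_n])>q_n$, a contradiction; so for each $n$ and each $t$ the set $[z_n]$ contains a well-split sequence on which both strategies have current capital below $2$. Using the bounded gap, current capital below $2$ forces maximal-capital-so-far below $2+B$, and the sets $[z_n]\cap\{\hat c_A,\hat c_B<2+B\text{ at }t\}$ are closed, nonempty, and nested in both $t$ and $n$; by compactness their total intersection is nonempty and lies in $[\zeta]=\bigcap_n[z_n]$. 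Any sequence there has bounded achieved capital under both strategies, so neither wins, and since the $[z_n]$ form a computable family with $\lambda([z_n])=2^{-n}$, the set $[\zeta]$ is an effective nullset and the sequence is non-MLR. I expect the main obstacle to be exactly the uniform-over-restrictions control of the split events — making the conditional split probabilities large simultaneously for all $2^n$ restrictions $z_n$ while respecting the adaptivity of the strategies — which is precisely what the two independence observations and the choice of later positions outside $K_i$ are designed to secure.
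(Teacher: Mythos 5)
Your proposal is correct and follows essentially the same route as the paper's proof: savings trick, an averaging argument producing positions $p_i$ and times $t_i$ with split probability at least $1-\delta_i$ (the paper uses $\delta_i = 2^{-2i}$ so that conditioning on the $2^{i-1}$ possible restrictions of earlier positions still leaves total failure below $1/2$, which is exactly your union-bound-with-slack), the bit-by-bit mass-halving choice of $z_n$ giving $q_n\le 2^{-n}$, and the mass-versus-measure contradiction followed by compactness. Your explicit choice of $p_i$ outside the inspected positions $K_{i-1}$ and the two independence observations are a slightly more structured way of packaging the same uniform conditional estimate that the paper obtains by the inductive bound $\lambda(W^n)\ge\frac12(1+2^{-n})\lambda([z_n])$.
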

\begin{proof}
	\begin{claim}\label{positions_read}
		There must be a sequence of position $p_1,p_2,\dots$
		and times $t_1, t_2,\dots$ such that
		the size of
		the set of sequences that were $\{p_i\}$-split on
		by both KLBS-es up to time $t_i$ is at least $1-2^{-2i}$
	\end{claim}
	\begin{proof}
		Suppose the claim is not true:
		there is some bound $d$ such that
		for every position $p$, at all times $t$,
		the size of
		the set of sequences that were $\{p\}$-split on
		by both KLBS-es, up to time $t$, is at most $1-d$.
		
		Let $x$ denote the sum, over the first $\ell$ positions $p$ in $\pi$, 
		of the sizes of sets of sequences on which
		at least one KLBS from the pair did not $\{p\}$-split on.
		The sum $x$ is at least $d\ell$.
		
		On the other hand, let $\epsilon,c$ be such that
		the size of the set of sequences on which either of the betting strategies has split less than $\ell-c\ell$ many times up to time $t$
		is less than $\epsilon$.
		The size of the set of sequences on which the betting strategies
		do not $\{p\}$-split on, for every $p$ among the first $\ell$ positions in $\pi$ is at most $\epsilon$,
		and for the remaining sequences there are at most $2c\ell$ positions $p$ such that one of the two strategies did not $\{p\}$-split on the sequence. 
		Therefore, the sum $x$ is at most $\epsilon\ell+(1-\epsilon)(2c\ell)$.
		
		We have that 
		$d\ell
		\leq
		\epsilon\ell+(1-\epsilon)(2c\ell)
		\leq
		(\epsilon+2c)\ell
		$.
		
		This is in contradiction with the assumption of the lemma, that on positions $\pi$,
		for every pair of constants $\epsilon,c$ for large enough $\ell,t$,
		the size of the set of sequences on which either of the betting strategies has split less than $\ell-c\ell$ many times up to time $t$
		is less than $\epsilon$.
	\end{proof}
	Note that the sequence of positions $p_1,p_2,\dots$, and times $t_1,t_2,\dots$
	in the claim \ref{positions_read} can be found effectively.
	But then, we can also effectively find a subsequence
	$p'_1,p'_2,\dots$ and $t'_1,t'_2,\dots$ such that
	$t'_1<t'_2<\dots$ and the size of
	the set of sequences that were $\{p'_i\}$-split on
	by both KLBS-es up to time $t'_i$ is at least $1-2^{-2i}$.

	By proposition \ref{savings_trick}, it is enough to consider only
	betting strategy with the savings property:
	the difference between capital and maximum capital of a part is bounded. 
	Let $\BS_A=(S_A,\mu_A),\BS_B=(S_B,\mu_B)$ denote the pair of KLBS-es with
	the savings property, and $\mu_A(\{0,1\}^\infty)+\mu_B(\{0,1\}^\infty)=1$. 
	We will construct a sequence of restrictions
	$z_0,z_1,z_2\dots$ such that
	$z_n$ extends $z_{n-1}$ by restricting the position $p'_n$,
	and for each $n$, $[z_n]$ contains a sequence on which both strategies achieve capital below some treshold.
	
	Let $z_0$ be the empty restriction,
	and let $V_A^0,V_B^0$ be the subset of parts, terminal at $t_0=0$ w.r.t. $S_A,S_B$, that are contained in $[z_0]$ (namely, both $V_A^0,V_B^0$ contain the only part that is terminal at $0$, that is, the set of all sequences). 
	When sets $V_A^n,V_B^n$ are defined,
	denote with $W^n$
	the union of sequences contained in intersections of parts in $V_A^n,V_B^n$,
	that is, $W^n=(\bigcup_{v\in V^n_A}v)\cap(\bigcup_{v\in V^n_A}v)$.
	
	Suppose $z_{n-1},V_A^{n-1},V_B^{n-1}$ are already defined,
	and $\lambda(W^{n-1})\geq \frac{1}{2}(1+2^{-(n-1)})\lambda([z_{n-1}])$.
	The restriction $z_n$ restricts position $p'_n$ to the value that minimizes 
	the sum of masses of both strategies assigned to the parts,
	terminal at $t'_n$ w.r.t. $S_A,S_B$, that are contained in $[z_n]$ and are a subset of part in $V^{n-1}_A,V^{n-1}_B$.
	Denote these subsets of parts with $V^n_A,V^n_B$.

	Since the value assigned to $p'_n$ by $z_n$ minimizes the sum,
	we have that 
	$\mu_A(V^n_A)+\mu_B(V^n_B)
	\leq\frac{1}{2}(\mu_A(V^{n-1}_A)+\mu_B(V^{n-1}_B))$.
	
	The size of the set of sequences that were $\{p'_n\}$-split on
	by both KLBS-es up to time $t'_n$ is at least $1-2^{-2n}$,
	implying that the size of the set of sequences on which at least one strategy, up to time $t'_n$, did not $\{p'_n\}$-split on is at most $2^{-2n}$.
	But then, 
	$
	\lambda(W^n)
	\geq
	\frac{1}{2}(\lambda(W^{n-1})-2^{-2n})
	\geq
	\frac{1}{2}(\frac{1}{2}(1+2^{-(n-1)})2^{-(n-1)}-2^{-2n})
	=
	\frac{1}{2}(2^{-n}+2^{-2n+1}-2^{-2n})
	=
	\frac{1}{2}(1+2^{-n})\lambda([z_n])
	$.
	
	By induction, for all $n$, the size of $W^n$ is more than half the size of $[z_n]$, and the sum of masses of parts that contain sequences from $W^n$ is at most $2^{-n}(\mu_A(\{0,1\}^\infty)+\mu_B(\{0,1\}^\infty))=2^{-n}$.
	Then, for every $t$ there must be a sequence in $W^n$ contained in
	intersection of parts $a,b$, terminal at $t$, with capital less than $2$.
	Since the betting strategies have the savings property,
	this implies that for all $n$, $[z_n]$ contains a sequence on which
	the betting strategies achieve capital below some treshold.
	The set of sequences on which
	the betting strategies achieve capital below the treshold is closed,
	and by compactness, the set $\bigcap_n[z_n]$ contains  a sequence on which
	the betting strategies achieve capital below the treshold (they do not win on the sequence).
	The set $\bigcap_n[z_n]$ is an effective nullset and all of the sequences in it are non-MLR. 	
\end{proof}

\end{document}